\newtheorem{theorem}{Theorem}[section]
\newtheorem{corollary}[theorem]{Corollary}
\newtheorem{lemma}[theorem]{Lemma}
\newtheorem{conj}[theorem]{Conjecture}
\theoremstyle{definition}
\newtheorem{definition}[theorem]{Definition}
\newtheorem{remark}[theorem]{Remark}
\numberwithin{equation}{section}
\newcommand{\be}{\begin{equation}}
\newcommand{\en}{\end{equation}}
\newcommand{\Hil}{{\cal H}}
\newcommand{\1}{{1\!\!1}}
\newcommand{\F}{{\cal F}}
\def\lead{\leaders\hbox to 1.5ex{\hss${.}$\hss}\hfill}
\def\arr{\hbox to 40pt{\rightarrowfill}}
\def\larr{\hbox to 40pt{\leftarrowfill}}
\long\def\alert#1{\parindent2em\smallskip\hbox to\hsize%
{\hskip\parindent\vrule%
\vbox{\advance\hsize-2\parindent\hrule\smallskip\parindent.4\parindent%
\narrower\noindent#1\smallskip\hrule}\vrule\hfill}\smallskip\parindent0pt}
\begin{document}

\title[Topological decompositions of the Pauli group and...]{Topological decompositions of the Pauli group and their influence on  dynamical systems}
\author[F. Bagarello]{Fabio Bagarello}
\address{Fabio Bagarello\endgraf
Dipartimento di  Ingegneria\endgraf 
Universit\'a  di Palermo\endgraf
Viale delle Scienze, I-90128, Palermo, Italy\endgraf
and \endgraf
Istituto Nazionale di Fisica Nucleare\endgraf
Sezione di Napoli, Via Cinthia, I-80126, Napoli, Italy}
\email{fabio.bagarello@unipa.it}

\author[Y. Bavuma]{Yanga Bavuma}
\address{Yanga Bavuma\endgraf
Department of Mathematics and Applied Mathematics\endgraf
University of Cape Town\endgraf
Private Bag X1, 7701, Rondebosch, Cape Town, South Africa}
\email{BVMYAN001@myuct.ac.za}

\author[F.G. Russo]{Francesco G. Russo}
\address{Francesco G. Russo\endgraf
Department of Mathematics and Applied Mathematics\endgraf
University of Cape Town\endgraf
Private Bag X1, 7701, Rondebosch, Cape Town, South Africa\endgraf
}
\email{francescog.russo@yahoo.com}


\date{\today}


\begin{abstract}
In the present paper we show that it is possible to obtain the well known Pauli group $P=\langle X,Y,Z \ | \ X^2=Y^2=Z^2=1, (YZ)^4=(ZX)^4=(XY)^4=1 \rangle $ of order $16$ as an appropriate quotient group of two distinct spaces of orbits of the three dimensional sphere $S^3$. The first of these spaces of orbits is realized via an action of the quaternion group $Q_8$ on $S^3$; the second one via an action of the cyclic group of order four $\mathbb{Z}(4)$  on $S^3$. We deduce a result of decomposition of $P$ of topological nature and then we find, in connection with the theory of pseudo-fermions, a possible physical interpretation of this decomposition. 
\end{abstract}

\subjclass[2010]{Primary:  57M07,  57M60; Secondary: 81R05, 22E70}
\keywords{Pauli groups; actions of groups ; hamiltonians; pseudo-fermions, central products}
\date{\today}

\maketitle

\section{Statement of the main result}

The Pauli group $P$ is a finite group of order $16$, introduced by W. Pauli in \cite{pauli}, and it is an interesting $2$-group, which has relevant properties for dynamical systems and theoretical physics. Pauli matrices are 
\begin{equation}\label{paulimatricesxyz} X=\left(
\begin{array}{rr}
0 & 1 \\
1 & 0%
\end{array}%
\right) ,Y=\left(
\begin{array}{rr}
0 & -i \\
i & 0%
\end{array}%
\right) \text{ and }Z=\left(
\begin{array}{rr}
1 & 0 \\
0 & -1
\end{array}%
\right),
\end{equation}
and one can check that $
X^2=Y^{2}=Z^{2}=\1=\left(
\begin{array}{cc}
1 & 0 \\
0 & 1%
\end{array}%
\right) $ and in addition that $$(YZ)^4=(ZX)^4=(XY)^4=\1,  \ \ {(XYZ)}^4=[XYZ,X]=[XYZ,Y]=[XYZ,Z]=\1,$$
in fact
\begin{equation}\label{presentation1}
P=\langle X,Y,Z \ | \ X^2=Y^2=Z^2=1, (YZ)^4=(ZX)^4=(XY)^4=1 \rangle. 
\end{equation}
In Quantum Mechanics the role of $P$ is well known (see \cite{kibler, messiah}) and it allows us to detect symmetries in numerous dynamical systems. More recently,  Pauli groups have been studied in connection with their rich lattice of subgroups (especially  abelian subgroups) and several applications have been found in Quantum Information Theory. Notable examples are quantum error correcting codes and the problem of finding mutually unbiased bases (see \cite{ gottesman1997stabilizer, gottesman1999, knill, KW, rocchetto}). It might be useful to observe that Pauli groups and Heisenberg groups have a precise meaning for their  generators and relations in physics. In particular, the role of Heisenberg algebras has been recently explored in   \cite{bagrus1, bagrus2, bagrus3}  within the framework of pseudo-bosonic operators.

Here we will describe $P$ in terms of an appropriate quotient of the fundamental group of a topological space, identifying in this way $P$ from a geometric point of view. A direct  geometric construction of $P$ is one of our main contributions. We involve some methods of general nature, but  develop  a series of tools which are designed for $P$ only. This choice is made for a specific motivation: we want to  avoid a universal approach for the notions of amalgamated product and \textit{central product} (see \cite{DoerHawk, gorenstein, hatcher,  kosnio}, or Section 3 later on), even if these two notions may be formalized in category theory, or  in classes of finite groups which are larger than the class of  $2$-groups. Our approach will have the advantage to analyse directly $P$, involving low dimensional topology and combinatorial results for which we do not need a computational software.

Following \cite{hatcher, kosnio}, $\pi(X)$ denotes the fundamental group of a path connected topological space $X$ and  $X/G$ the space of orbits of $X$ under a (left) action of a group $G$ acting on $X$. For the $n$-dimensional sphere $S^n$ of the euclidean space $\mathbb{R}^{n+1}$ we recall that  $S^n=\partial B^{n+1}_1(0)$, that is, $S^n$ agrees with the boundary of the ball $B^{n+1}_1(0)$ of center at the origin and radius one in $\mathbb{R}^{n+1}$. The terminology is standard and follows \cite{DoerHawk, gorenstein, hatcher, kosnio, MaKaSo}. In particular, a manifold $M$  of dimension $\mathrm{dim}(M)=n$  is a Hausdorff space $M$ in which each point of $M$ has an open neighbourhood homeomorphic to to the open ball  $B^n_1(0)$, that is, to $ B^n_1(0) \setminus \partial B^n_1(0)$ (see \cite[Definition 11.1]{kosnio}). Note that $Q_8$ denotes the well known quaternion group of order $8$ and $\mathbb{Z}(m)$ the cyclic group of integers modulo $m$ (with $m$ positive integer). On the other hand, we will refer to the usual \textit{connected sum} $\#$ between manifolds (see \cite{hatcher, kosnio}). Our first main result is the following.

\begin{theorem}\label{maintheorem}
There exist two compact path connected orbit spaces $U=S^3/Q_8$ and  $V=S^3/\mathbb{Z}(4)$ such that the following conditions hold:
\begin{itemize}
\item[(i)]$U \cup V$ is a compact path connected space with  $U \cap V \neq \emptyset$, $\pi (U \cap V) $ cyclic of order $2$ and $P \simeq \pi(U \cup V)/N$ for some normal subgroup $N$ of $\pi(U \cup V)$; 
\item[(ii)]$U \# V$ is a Riemannian manifold of $\mathrm{dim} (U \# V)=3$  and $P \simeq \pi(U \# V)/L$ for some normal subgroup $L$ of $\pi(U \# V)$.
\end{itemize}
Both in case (i) and (ii), $P$ is  central product of $\pi (U)$ and  $\pi (V)$.
\end{theorem}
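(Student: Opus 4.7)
The plan is to exploit the fact that both $Q_8$ and $\mathbb{Z}(4)$ embed into the unit quaternion group $S^3\simeq Sp(1)$ and act freely and properly discontinuously on $S^3$ by left translation. First I would identify $U=S^3/Q_8$ as the quaternionic spherical space form and $V=S^3/\mathbb{Z}(4)$ as the lens space $L(4,1)$; both are compact, path connected, Riemannian $3$-manifolds, and standard covering space theory gives $\pi(U)\simeq Q_8$ and $\pi(V)\simeq \mathbb{Z}(4)$. This fixes the geometric setting in which the rest of the argument takes place.

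The second, purely group-theoretic, step is to verify that $P\simeq Q_8 \ast_{\mathbb{Z}(2)}\mathbb{Z}(4)$ as a central product, where the amalgamating $\mathbb{Z}(2)$ is the unique central involution common to both factors. Using the presentation \eqref{presentation1}, one checks that $\langle YZ, ZX\rangle\simeq Q_8$ (with $(YZ)^2=(ZX)^2$ the unique nontrivial central involution) and that $\langle XYZ\rangle\simeq \mathbb{Z}(4)$, while $(XYZ)^2=(YZ)^2$; the relations $[XYZ,X]=[XYZ,Y]=[XYZ,Z]=1$ force element-wise commutation of the two subgroups, and the order count $|Q_8|\cdot|\mathbb{Z}(4)|/|\mathbb{Z}(2)|=16=|P|$ completes the identification. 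This is the algebraic content behind the last sentence of the theorem.

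For part (ii), since $U$ and $V$ are closed $3$-manifolds their connected sum $U\# V$ is again a compact Riemannian $3$-manifold (after the usual smoothing of the gluing sphere), and Seifert--van Kampen applied to the standard decomposition of a connected sum yields $\pi(U\# V)\simeq \pi(U)\ast \pi(V)\simeq Q_8 \ast \mathbb{Z}(4)$, the free product. The universal property then produces a canonical surjection $Q_8\ast\mathbb{Z}(4)\twoheadrightarrow Q_8\ast_{\mathbb{Z}(2)}\mathbb{Z}(4)\simeq P$, and I would define $L$ as its kernel, that is, the normal closure of the commutators $[a,b]$ with $a\in\pi(U)$, $b\in\pi(V)$ together with the element identifying the two central involutions.

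For part (i), the aim is to realise $U\cup V$ with $\pi(U\cap V)$ cyclic of order $2$. The natural candidate for $U\cap V$ is a closed tubular neighbourhood of the common image of the orbit of $\{\pm 1\}\subset Q_8\cap \mathbb{Z}(4)\subset S^3$, which is an $\mathbb{RP}^2$-type subspace and has $\pi(U\cap V)\simeq \mathbb{Z}(2)$. Seifert--van Kampen applied to the open cover $\{U,V\}$ then produces $\pi(U\cup V)\simeq Q_8 \ast_{\mathbb{Z}(2)}\mathbb{Z}(4)$, the amalgamated free product, and $P$ appears as the quotient by the normal subgroup $N$ enforcing commutation of $\pi(U)$ with $\pi(V)$, exactly as in (ii). The hardest step, and the one I expect to be the main obstacle, is precisely this geometric construction: exhibiting a simultaneous embedding of $U$ and $V$ into a common ambient space whose intersection is a path connected set with fundamental group exactly $\mathbb{Z}(2)$. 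Once that is in place, van Kampen and the order count of the preceding algebraic step conclude both assertions simultaneously.
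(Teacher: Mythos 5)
Your overall skeleton is the same as the paper's: identify $\pi(U)\simeq Q_8$ and $\pi(V)\simeq\mathbb{Z}(4)$ via covering space theory, apply Seifert--van Kampen to $U\cup V$ (amalgamation over the common central $\mathbb{Z}(2)$) and to $U\# V$ (free product), and obtain $P$ by killing the commutators $[\pi(U),\pi(V)]$, recognising the result as the central product $Q_8\circ\mathbb{Z}(4)$. Your algebraic identification is done differently and, in my view, more cleanly: you exhibit $\langle YZ,ZX\rangle\simeq Q_8$ and $\langle XYZ\rangle\simeq\mathbb{Z}(4)$ inside $P$ as concrete matrix subgroups, note that $XYZ$ is central with $(XYZ)^2=(YZ)^2$, and finish with the order count $8\cdot 4/2=16$; the paper instead grinds through Tietze transformations between the presentation \eqref{presentation1} and the presentation of Lemma \ref{equivpresentation} with $u=XY$, $x=Y$, $y=XYZ$. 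Both are correct (one caveat: your notation $Q_8\ast_{\mathbb{Z}(2)}\mathbb{Z}(4)$ for the central product is dangerous, since the amalgamated free product it usually denotes is infinite; it is the quotient of $Q_8\times\mathbb{Z}(4)$ identifying the central involutions that you actually mean and actually use). For (ii), the paper does substantially more than ``the usual smoothing'': it builds the metric on $U\# V$ explicitly from the round orbifold metrics on $U'$ and $V'$ via a quotient semi-metric, which is what licenses the word ``Riemannian'' in the statement.

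The one step that would fail as you describe it is your candidate for $U\cap V$ in part (i). The set $\{\pm 1\}\subset S^3$ maps to a single point of $U$ (since $-1\in Q_8$) and to a single point of $V$, so a tubular neighbourhood of that image is a ball, which is simply connected, not an ``$\mathbb{RP}^2$-type subspace'' with $\pi_1\simeq\mathbb{Z}(2)$. To get $\pi(U\cap V)\simeq\mathbb{Z}(2)$ mapping onto the central involutions of both $Q_8$ and $\mathbb{Z}(4)$ you would need something like a neighbourhood of an embedded loop representing $u^2$ in $U$ and $y^2$ in $V$, together with an actual ambient space in which $U$ and $V$ are open and overlap in such a set. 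You are right that this is the hard point, but you should be aware that the paper does not supply it either: its proof of (i) takes the presentations of $\pi(U)$, $\pi(V)$ and the subgroup inclusions $\langle u^2\rangle\hookrightarrow\pi(U)$, $\langle y^2\rangle\hookrightarrow\pi(V)$ as the van Kampen input data and computes the pushout formally, without exhibiting the embeddings of $U$ and $V$ into $U\cup V$ or verifying the openness and path-connectedness hypotheses for the intersection. So your attempt reproduces the paper's argument, including its weakest link; the honest flag you raise applies to both.
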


A separation in (i) and (ii)  is made, because we  stress that a topological decomposition  $U \cup V$  cannot produce enough information on the dimension of $U \# V$, or on the fact that it is Riemannian.  The formation of $U \cup V$ is in fact more general than the formation of $U \# V$. On the other hand, the algebraic decomposition of $P$ is not affected from the topological ones in  Theorem \ref{maintheorem}.
  
Another interesting result of the present paper is that $P$ can be expressed in terms of \textit{pseudo-fermionic operators}. The reader can refer to \cite{baginbagbook, bagpf1, bagpf2} (or  to a final Appendix here) for  the main notions on the theory of pseudo-fermionic operators: these are  operators defined by suitable anti-commutation relations. It is relevant to note that some dynamical aspects of physical systems (involving pseudo-fermionic operators) are connected with a suitable decomposition of $P$. This will be shown later, and is stated by the following theorem.

\begin{theorem}\label{maintheorembis}
There are two  dynamical systems $S$ and $T$ involving pseudo-fermions with  groups of symmetries respectively $P_\mu \simeq P$ and $Q_8$ but with the same hamiltonian $H_S=H_T$.  In particular, there exist dynamical systems admitting larger groups of symmetries, whose size does not affect  the dynamical aspects of the system.
\end{theorem}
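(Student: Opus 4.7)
The plan is to construct a single pseudo-fermionic hamiltonian $H$, set $H_S := H =: H_T$, and then exhibit two distinct unitary actions on the underlying Hilbert space that commute with $H$ and that realise $P_\mu \simeq P$ and $Q_8$ respectively. Following the formalism of \cite{baginbagbook, bagpf1, bagpf2} surveyed in the Appendix, I would introduce two pseudo-fermionic pairs $(a_i, b_i)$, $i=1,2$, on a finite-dimensional Hilbert space $\mathcal{H}$, with the anti-commutation rules $\{a_i, b_j\} = \delta_{ij}\1$ and $a_i^2 = b_j^2 = 0$, and with pseudo-number operators $N_i = b_i a_i$. The hamiltonian
\[ H := \omega\,(N_1 + N_2), \qquad \omega \in \mathbb{R}, \]
is by construction invariant under any unitary permutation of the two modes, and $H_S = H_T$ is forced by fiat.

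For the system $S$, I would let the matrices $X,Y,Z$ of \eqref{paulimatricesxyz} act on the mode-label index: $X$ swaps the two pseudo-fermionic pairs, while $Y$ and $Z$ introduce the appropriate phases that recover the relations \eqref{presentation1}. A direct check then shows that the subgroup $P_\mu \leqslant U(\mathcal{H})$ so generated is isomorphic to $P$ and lies in the commutant of $H$, so $P_\mu$ is a genuine group of symmetries for $S$. For the system $T$, I would single out the distinguished subgroup $\{\pm\1,\pm iX,\pm iY,\pm iZ\}\leqslant P_\mu$, which is quickly recognised as $Q_8$ under $\mathbf{i}\leftrightarrow iX$, $\mathbf{j}\leftrightarrow iY$, $\mathbf{k}\leftrightarrow -iZ$ and which still centralises $H$; this is compatible with the central product decomposition of $P$ furnished by Theorem \ref{maintheorem}.

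The ``in particular'' clause of the theorem is then automatic: for every observable $O$ the Heisenberg evolution $O(t) = e^{iHt} O e^{-iHt}$ depends only on $H$, and $H_S = H_T$ by construction, so the two dynamical systems produce identical trajectories although $|P_\mu| = 2|Q_8|$. The effect of the larger symmetry group $P_\mu$ is only to provide a finer decomposition of the spectral subspaces of $H$, not to change the equations of motion.

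The main technical subtlety I anticipate is verifying that the Pauli action on the mode index lifts to a \emph{faithful} (rather than merely projective) representation of $P_\mu$ inside $U(\mathcal{H})$, so that the relations \eqref{presentation1} hold on the nose; because the pseudo-fermionic formalism carries a biorthogonal rather than self-dual structure, one must be careful to work with the unitary group of the physical inner product and, if necessary, to conjugate by the intrinsic metric operator so that the Pauli generators become genuine unitaries. Once this identification is established, each commutator $[g,H]$ for $g$ a generator of $P_\mu$ reduces to a one-line calculation and the proof closes.
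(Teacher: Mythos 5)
Your proposal takes a genuinely different route from the paper, and in doing so it both leaves a concrete gap and misses the theorem's actual content. The paper does not build a two-mode system with a permutation action; it takes the specific non-self-adjoint Hamiltonian $H_{eff}$ of a damped two-level atom from \cite{tripf}, exhibits a \emph{single} pseudo-fermionic pair $(a,b)$ on $\mathbb{C}^2$ with $\{a,b\}=\1$, $a^2=b^2=0$, forms $\mu_1=b+a$, $\mu_2=i(b-a)$, $\mu_3=[a,b]$, verifies the Pauli relations \eqref{53}, and writes $H_{eff}=-\tfrac{\Omega}{2}\mu_3$. The whole point, read off from Lemma \ref{equivpresentation}, is that in the decomposition $P=Q_8\circ\mathbb{Z}(4)$ the generator $y$ of the $\mathbb{Z}(4)$ factor is realized as $i\1$ (a constant of motion), while $u=i\mu_3$ and $xy=i\mu_2$ generate the $Q_8$ factor that contains the generator of the time evolution. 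Note in particular that the paper's $P_\mu$ is \emph{not} contained in the commutant of $H_{eff}$ (indeed $\mu_1\mu_3=-i\mu_2\neq\mu_3\mu_1$); ``group of symmetries'' here means the operator group naturally attached to the system, not the centralizer of $H$. Your reading of ``symmetry group'' as the commutant converts the theorem into the near-tautology that a subgroup of a symmetry group is again a symmetry group, which cannot be the intended content and which makes no use of the central product structure beyond a passing remark.

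Independently of this mismatch, your construction is not actually carried out: you never specify the ``appropriate phases'' by which $Y$ and $Z$ are supposed to act on the mode label, so it is not verified that the operators you obtain satisfy the relations \eqref{presentation1} and generate a group of order $16$ isomorphic to $P$ rather than a proper quotient or a different extension, nor that they all commute with $H=\omega(N_1+N_2)$ once the phases are inserted. Your closing worry about unitarity is also somewhat misplaced: the paper's $\mu_j$ are built from non-self-adjoint $a,b$ with $b\neq a^\dagger$ and are not required to be unitary, so no conjugation by a metric operator is needed. To repair the proposal along the paper's lines you would want to produce a concrete pseudo-fermionic model, express its Hamiltonian inside the $Q_8$ factor of $P=Q_8\circ\mathbb{Z}(4)$, and show that the complementary $\mathbb{Z}(4)$ factor acts by scalars, so that enlarging $Q_8$ to $P_\mu$ changes the operator group but not the dynamics.
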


The paper is structured as follows. We begin to recall some facts of Algebraic Topology in Section 2, where  finite group actions are involved on three dimensional spheres. Then Section 3 is devoted to prove some results on  central products in combinatorial group theory, separating a paragraph of general nature from one which is specific on the behaviour of the Pauli group. Finally, the proofs of our main results are placed in Sections 4 and 5. In particular, connections with mathematical physics are presented in Section 5 and  conclusions are placed in Section 6. A brief review on pseudo-fermions is given in  Appendix.

\section{Some facts on the actions of $Q_8$ and $\mathbb{Z}(4)$ on $S^3$}

Let $K$ be a field and $A$ be a vector space over $K$ with an additional internal operation  $$\bullet : (\textbf{x}, \textbf{y}) \in A \times A \mapsto  \textbf{x} \bullet \textbf{y} \in A$$ such that 
$$(\textbf{x}+\textbf{y}) \bullet \textbf{z}=\textbf{x} \bullet \textbf{z}+\textbf{y} \bullet \textbf{z}, \ \ \textbf{x} \bullet (\textbf{y} + \textbf{z})=\textbf{x} \bullet \textbf{y}+\textbf{x} \bullet \textbf{z}, \ \ (a \textbf{x}) \bullet (b \textbf{y})=(ab)(\textbf{x} \bullet \textbf{y})$$ for all $\textbf{x}, \textbf{y}, \textbf{z} \in A$ and $a,b \in K$. In this situation $A$ is an \textit{algebra} on $K$. The quaternion algebra $\mathbb{H}$ (on $\mathbb{R}$) is a different way to endow an algebraic (and topological) structure on the usual euclidean space $\mathbb{R}^4$. Specifically it consists of all the linear combinations $a\textbf{1}+b\textbf{i}+c\textbf{j}+d\textbf{k}$, where $a,b,c, d \in \mathbb{R}$ and $\{\textbf{1}, \textbf{i}, \textbf{j}, \textbf{k}\}$ forms a standard basis for the vector space $\mathbb{H}$, hence $$ \mathrm{span}(\textbf{1}, \textbf{i}, \textbf{j}, \textbf{k})=\mathrm{span}((1,0,0,0), (0,1,0,0), (0,0,1,0), (0,0,0,1))=\mathbb{H}.$$  In addition to the pointwise sum of elements of $\mathbb{H}$ and to the usual scalar multiplication on $\mathbb{R}$, we have the (internal) multiplication in $\mathbb{H}$, given by
$$\textbf{x} \bullet \textbf{y}=(a_1\textbf{1}+b_1\textbf{i}+c_1\textbf{j}+d_1\textbf{k}) \bullet (a_2\textbf{1}+b_2\textbf{i}+c_2\textbf{j}+d_2\textbf{k})=(a_1 a_2-b_1 b_2 -c_1 c_2 -d_1 d_2)\textbf{1} $$
$$+(a_1 b_2+b_1 a_2 +c_1 d_2 -d_1 c_2)\textbf{i} +(a_1 c_2-b_1 d_2 +c_1 a_2 +d_1 b_2)\textbf{j}+(a_1 d_2+b_1 c_2 -c_1 b_2 +d_1 a_2)\textbf{k}.$$
From this rule, one can check that 
\begin{equation}\label{rulesq8}
\textbf{i} \bullet \textbf{i}=\textbf{i}^2=-\textbf{1}, \ \ \textbf{j} \bullet \textbf{j}=\textbf{j}^2=-\textbf{1}, \ \ \textbf{k} \bullet \textbf{k}=\textbf{k}^2=-\textbf{1}, \ \ \textbf{i} \bullet \textbf{j}=\textbf{k}, \ \ \textbf{j} \bullet \textbf{k}=\textbf{i}, \ \ \textbf{k} \bullet \textbf{i}=\textbf{j}.
\end{equation}
and that every nonzero element of $\mathbb{H}$ has an inverse (w.r.t. $\bullet$ ) of the form
$$(a\textbf{1}+b\textbf{i}+c\textbf{j}+d\textbf{k})^{-1}=\frac{1}{a^2+b^2+c^2+d^2} (a\textbf{1}-b\textbf{i}-c\textbf{j}-d\textbf{k}),$$
and so every nonzero element of $\mathbb{H}$ has a multiplicative inverse. In this context the usual sphere $S^3=\{(x,y,z,t) \ | \ x^2+z^2+y^2+t^2=1\} \subseteq \mathbb{R}^4$  can be regarded as the set of elements of $\mathbb{H}$ with norm equal to one. On $\mathbb{H}$ we may introduce the norm $\| \ \|$  which is defined as the usual Euclidean norm in $\mathbb{R}^4$, but on $\mathbb{H}$ now the norm $\| \ \|$ becomes multiplicative, i.e. $$\|\textbf{x} \bullet \textbf{y}\|=\|\textbf{x}\| \ \  \|\textbf{y}\|  \ \ \mbox{for all} \ \ \textbf{x},\textbf{y} \in \mathbb{H},$$ so the multiplication of vectors whose norms equal one will result in a vector whose norm equals one. From \cite{hatcher}, we know that such a norm allows us to give a topological  structure on $\mathbb{H}$  and  $S^3$ may be regarded as topological subspace of $\mathbb{H}$, but also as a group with respect to the algebraic structure of $\mathbb{H}$, because the operation $$(\textbf{x},\textbf{y}) \in S^3 \times S^3 \longmapsto  \textbf{x} \bullet \textbf{y} \in S^3$$ is well defined and endow $S^3$ of a structure of group.

Let recall some classical notions from \cite{hatcher, kosnio}. Given a group  $G$ and  a set $X$,  the map $(g,x) \in G \times X \longmapsto g \cdot x \in X$ is said to be a (left) \textit{action} if $1 \cdot x=x$ for all $x \in X$ and if $g \cdot (h \cdot x)=(g \cdot h)\cdot x$ for all $x \in X$ and $g,h \in G$. Then we say that  $G$ acts \textit{freely} on $X$ (or is a \textit{free action}) if $g \cdot x \neq x$ for all $x \in X$, $g \in G$, $g \neq 1$. The following  fact is known, but we offer a direct argument:

\begin{lemma}\label{superlegalaction} The groups $Q_8$ and $\mathbb{Z}(4)$ act freely on $S^3$.
\end{lemma}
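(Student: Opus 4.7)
The plan is to exploit the fact, already established in the excerpt, that $S^3$ carries a topological group structure inherited from $\mathbb{H}$: the multiplicativity of the norm $\|\textbf{x}\bullet\textbf{y}\|=\|\textbf{x}\|\,\|\textbf{y}\|$ guarantees that $\bullet$ restricts to a well-defined binary operation on $S^3$, and every unit quaternion has a unit quaternion inverse. Consequently, any subgroup of $(S^3,\bullet)$ acts on $S^3$ by left translation, and such an action is automatically free, since $g\bullet x = x$ with $x\in S^3$ yields, after right multiplication by $x^{-1}$, $g = \textbf{1}$.

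First I would exhibit $Q_8$ as a subgroup of $S^3$. Using the identifications $\textbf{1}=(1,0,0,0)$, $\textbf{i}=(0,1,0,0)$, $\textbf{j}=(0,0,1,0)$, $\textbf{k}=(0,0,0,1)$, the set $\{\pm\textbf{1},\pm\textbf{i},\pm\textbf{j},\pm\textbf{k}\}$ lies inside $S^3$ (each has norm $1$), is closed under $\bullet$ by (\ref{rulesq8}), and the relations in (\ref{rulesq8}) exactly match those of the standard presentation of $Q_8$. Hence one obtains an injective homomorphism $Q_8\hookrightarrow S^3$; define the action $Q_8\times S^3\to S^3$ by $(g,\textbf{x})\mapsto g\bullet\textbf{x}$. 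The two action axioms are immediate from associativity of $\bullet$ and the fact that $\textbf{1}$ is a two-sided identity.

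Second I would treat $\mathbb{Z}(4)$. Since $\textbf{i}\bullet\textbf{i}=-\textbf{1}$ and $(-\textbf{1})\bullet(-\textbf{1})=\textbf{1}$, the cyclic subgroup $\langle\textbf{i}\rangle=\{\textbf{1},\textbf{i},-\textbf{1},-\textbf{i}\}$ has order $4$, so there is an injective homomorphism $\mathbb{Z}(4)\hookrightarrow S^3$ sending the generator to $\textbf{i}$. Again define the action by left quaternion multiplication.

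Finally I would verify freeness uniformly: for $G\in\{Q_8,\mathbb{Z}(4)\}$, suppose $g\in G$ fixes some $\textbf{x}\in S^3$, i.e. $g\bullet\textbf{x}=\textbf{x}$. Multiplying on the right by $\textbf{x}^{-1}\in S^3$ (which exists by the inversion formula recalled before (\ref{rulesq8})) gives $g=\textbf{1}$, so no non-identity element has a fixed point. There is no real obstacle here: the only slightly delicate point is making explicit that the embeddings actually land in $(S^3,\bullet)$ and that the cancellation used to conclude freeness is legitimate, both of which are handled by the group structure on $S^3$ already in place.
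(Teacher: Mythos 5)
Your argument is correct, and for $Q_8$ it coincides with the paper's: embed $Q_8=\{\pm\textbf{1},\pm\textbf{i},\pm\textbf{j},\pm\textbf{k}\}$ in the group $(S^3,\bullet)$ and let it act by left translation, with freeness following from cancellation (your version, right-multiplying by $\textbf{x}^{-1}$, is in fact cleaner than the paper's phrasing ``$g\bullet x=x$ only for $g=1$ or $x=0$''). Where you genuinely diverge is the $\mathbb{Z}(4)$ case. You realize $\mathbb{Z}(4)$ as the cyclic subgroup $\langle\textbf{i}\rangle\leqslant S^3$ and again act by left quaternion multiplication, so freeness is automatic and the two cases are handled uniformly. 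The paper instead writes $S^3\subseteq\mathbb{C}^2$ and uses the weighted diagonal action $h\colon(z_0,z_1)\mapsto(e^{2\pi i/4}z_0,e^{2\pi i\cdot 3/4}z_1)$ --- a lens-space action of type $L(4;1,3)$ --- and verifies freeness by an explicit computation with the exponentials. Your route buys simplicity and uniformity; the paper's buys a specific model of the quotient. For the statement of the lemma this makes no difference, and even downstream nothing breaks: both actions are free and properly discontinuous, both orbit spaces have fundamental group $\mathbb{Z}(4)$ (indeed $L(4;1)$ and $L(4;3)$ are homeomorphic lens spaces since $3\equiv-1\pmod 4$), so Corollary~\ref{orbitspaces} and Lemma~\ref{superlegalactiontwo} go through verbatim with your action in place of \eqref{action3}. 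The only caveat is that later references to ``the action \eqref{action3}'' would have to be read as referring to your left-translation action instead.
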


\begin{proof}

The map $(\textbf{x},\textbf{y}) \in S^3 \times S^3 \longmapsto  \textbf{x} \bullet \textbf{y} \in S^3$ makes $S^3$ a group, and  an appropriate restriction of this map to a subgroup $G$ of $S^3$ allows us to define an action on $S^3$ of the form 
\begin{equation}\label{action1}
(g,x) \in G \times S^3 \longmapsto g \cdot x=g \bullet x \in S^3.\end{equation} 
In particular we look at \eqref{rulesq8} and note that this happens when $G$ is chosen as $$Q_8=\langle \textbf{1},  \textbf{i},  \textbf{j}, \textbf{k} \ | \ \textbf{i}^2=\textbf{j}^2=\textbf{k}^2=-\textbf{1}, \ \ \textbf{i} \bullet \textbf{j}=\textbf{k}, \ \ \textbf{j} \bullet \textbf{k}=\textbf{i}, \ \ \textbf{k} \bullet \textbf{i}=\textbf{j} \ \rangle,$$ producing the action 
\begin{equation}\label{action2}
(q,x) \in Q_8 \times S^3 \longmapsto q \cdot x=q \bullet x \in S^3. \end{equation} 
  The rest follows from the fact that $\mathbb{H} \setminus \{0\}$ contains inverses for all its elements and so the equation $g \bullet x=x$ is only true for $g=1$ or $x=0$ where $g,x \in \mathbb{H}$. This shows that $Q_8$ acts freely on $S^3$. In order to show the second part of the result, we consider $$h: (z_0,z_1) \in S^3 \longmapsto (e^{\frac{2 \pi i}{4}} z_0,e^{\frac{2 \pi i 3}{4}} z_1) \in S^3$$ and observe that $S^3$, regarded as a group,  contains cyclic subgroups of order four, so we may define the action of $\mathbb{Z}(4)$ on $S^3$ by 
\begin{equation}\label{action3}(n,(z_0,z_1)) \in \mathbb{Z}(4) \times S^3 \longmapsto n \cdot (z_0,z_1)=h^n(z_0,z_1) \in S^3.\end{equation}
We need to show that the only element that fixes points in this action is the identity element. This can be checked easily, since
$$h^n(z_0,z_1)=(z_0,z_1) \ \ \Longleftrightarrow \ (e^{\frac{2 \pi i n}{4}} z_0,e^{\frac{2 \pi i 3 n}{4}} z_1)=(z_0,z_1) \ \ \Longleftrightarrow \ e^{\frac{2 \pi i n}{4}} z_0=z_0,  \ \ e^{\frac{2 \pi i 3 n}{4}} z_1=z_1.$$
Since $z_0$ or $z_1$ are different from zero, we have for all $k \ge 0$
$$ e^{\frac{2 \pi i n}{4}} z_0=z_0 \ \Longleftrightarrow \  \frac{2 \pi i n}{4}=0+2 \pi i k \ \Longleftrightarrow   \ n= 4k;$$
$$ e^{\frac{2 \pi i 3 n}{4}} z_1=z_1 \ \Longleftrightarrow \  \frac{2 \pi i 3 n}{4}=0+2 \pi i k   \ \Longleftrightarrow   \   2 \pi i 3 n=8 \pi i k. $$
Clearly $n \equiv 0 \mod 4$ and therefore the only fixed points are under the identity  action of $\mathbb{Z}(4)$. This means that  $\mathbb{Z}(4)$ acts freely on $S^3$.
\end{proof}

The  notion of \textit{properly discontinuous action} is well known and can be found in \cite[P.143]{kosnio}, namely if $X$ is a $G$-space, that is, $X$ is a topological space possessing a left action $\cdot$ of a group $G$ on $X$ and in addition the function $\theta_g : x\in X \to g\cdot x \in X$ is continuous for all $g \in G$, we say that the action $\cdot$ is properly discontinuous if for each $x \in X$ there is an open neighbourhood $V$ of $x$ such that $g_1 \cdot V \cap g_2 \cdot V = \emptyset$ for all $g_1, g_2 \in G$ with $g_1 \neq g_2$.  

\begin{corollary}\label{orbitspaces}
There are properly discontinuous actions of $Q_8$ and $\mathbb{Z}(4)$ on $S^3$. Moreover, the orbit spaces $S^3/Q_8$ and $S^3/\mathbb{Z}(4)$ are compact and path connected.
\end{corollary}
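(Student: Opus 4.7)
The plan is to use Lemma \ref{superlegalaction} as input and reduce everything to a general principle: a free action of a finite group on a Hausdorff space is automatically properly discontinuous. Once this is in place, compactness and path connectedness of the orbit spaces follow from the fact that the quotient projection is continuous and surjective, so it transports the corresponding properties of $S^3$ to $S^3/Q_8$ and $S^3/\mathbb{Z}(4)$.

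For the first claim I would fix $G \in \{Q_8, \mathbb{Z}(4)\}$ and $x \in S^3$. Freeness gives $hx \neq x$ for every $h \in G \setminus \{1\}$, and Hausdorffness of $S^3$ then produces disjoint open neighbourhoods $U_h$ of $x$ and $W_h$ of $hx$. Using that $\theta_h$ is continuous, the set $V_h = U_h \cap \theta_h^{-1}(W_h)$ is an open neighbourhood of $x$ satisfying $hV_h \subseteq W_h$, so $hV_h \cap V_h = \emptyset$. Since $G$ is finite, the intersection $V = \bigcap_{h \neq 1} V_h$ is still an open neighbourhood of $x$, and it has $hV \cap V = \emptyset$ for every $h \neq 1$. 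The identity $g_1 V \cap g_2 V = g_1\bigl(V \cap (g_1^{-1}g_2)V\bigr)$ with $h = g_1^{-1}g_2 \neq 1$ then yields the condition $g_1 \cdot V \cap g_2 \cdot V = \emptyset$ whenever $g_1 \neq g_2$, which is exactly the definition of properly discontinuous recalled just before the statement.

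For the remaining two properties I would simply push forward along the canonical projection $q_G : S^3 \to S^3/G$, which is continuous and surjective by construction of the quotient topology. The sphere $S^3$ is compact, being a closed bounded subset of $\mathbb{R}^4$, and path connected, since any two unit vectors can be joined by a great-circle arc. Continuous surjective images preserve both properties, so $S^3/Q_8$ and $S^3/\mathbb{Z}(4)$ are compact and path connected. I do not anticipate any real obstacle here: the only mildly delicate point is correctly translating ``free action'' into the disjoint-translates formulation, and finiteness of $G$ makes the finite intersection of the $V_h$'s trivially legitimate, so no genuine technical hurdle appears.
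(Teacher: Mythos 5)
Your proposal is correct and follows essentially the same route as the paper: reduce proper discontinuity to the general fact that a free action of a finite group on a Hausdorff space is properly discontinuous, then obtain compactness and path connectedness as continuous surjective images of $S^3$. The only difference is that the paper cites this general fact (Theorem 17.2 of Kosniowski) while you supply its standard proof explicitly, which is a perfectly valid substitute.
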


\begin{proof}
From \cite[Theorem 17.2]{kosnio} if we have a finite group $G$ that acts freely on a Hausdorff space $X$ then the action of $G$ on $X$ is properly discontinuous. From \cite[Theorem 7.8]{kosnio} and \cite[Theorem 12.4]{kosnio} the image of a compact, path connected space is itself compact and path connected.
\end{proof}

Fundamental groups of spaces of orbits can be easily computed when the actions are given. Details can be found in \cite[Chapters 18 and 19]{kosnio}. Therefore we may conclude that

\begin{lemma}\label{superlegalactiontwo}
The fundamental group $\pi (S^3/Q_8)$ of the space of orbits $S^3/Q_8$ via the action \eqref{action2} is isomorphic to $Q_8$.
Moreover $\pi (S^3/\mathbb{Z}(4)) \cong \mathbb{Z}(4)$ via the action \eqref{action3}.
\end{lemma}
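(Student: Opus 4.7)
The plan is to invoke covering space theory: since $S^3$ is simply connected for dimension reasons (any loop in $S^n$ with $n\geq 2$ is null-homotopic, because it misses a point and $S^n$ minus a point is contractible), and since by Lemma~\ref{superlegalaction} and Corollary~\ref{orbitspaces} both actions in \eqref{action2} and \eqref{action3} are free and properly discontinuous on the Hausdorff space $S^3$, the quotient maps
\[
p_{Q_8}: S^3 \longrightarrow S^3/Q_8 \qquad \text{and}\qquad p_{\mathbb{Z}(4)}: S^3 \longrightarrow S^3/\mathbb{Z}(4)
\]
are regular covering maps with $S^3$ as the universal cover in each case.

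Next I would quote the standard result (available in \cite[Chapters 18, 19]{kosnio}, and also as Proposition 1.40 in Hatcher) asserting that whenever a group $G$ acts freely and properly discontinuously on a simply connected, locally path connected Hausdorff space $X$, the orbit projection $X\to X/G$ is the universal covering and the deck transformation group coincides with $G$, hence
\[
\pi(X/G)\;\cong\; G.
\]
All hypotheses are already at hand: $S^3$ is simply connected, locally path connected, and Hausdorff; the free action of $Q_8$ on $S^3$ via left quaternionic multiplication was verified in Lemma~\ref{superlegalaction}; the same was done for the $\mathbb{Z}(4)$-action generated by $h(z_0,z_1)=(e^{2\pi i/4}z_0, e^{6\pi i/4}z_1)$; and proper discontinuity is given by Corollary~\ref{orbitspaces}. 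Applying the above theorem with $X=S^3$ and $G=Q_8$ yields $\pi(S^3/Q_8)\cong Q_8$, and with $G=\mathbb{Z}(4)$ yields $\pi(S^3/\mathbb{Z}(4))\cong \mathbb{Z}(4)$, which is the statement.

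There is no real obstacle here; the only point worth underlining is that the identification of $\pi(X/G)$ with $G$ in this theorem is canonical via deck transformations, so the isomorphism is not an abstract coincidence of orders but arises from the action itself. This is important for subsequent sections, where $\pi(U)$ and $\pi(V)$ will be assembled into a central product realizing the Pauli group $P$, and one needs to remember which generators of the fundamental groups correspond to which generators of $Q_8$ and $\mathbb{Z}(4)$. Concretely, a loop in $S^3/G$ lifts to a path in $S^3$ from some basepoint $x_0$ to $g\cdot x_0$, and the class of the loop is sent to $g$ under the isomorphism; this is what will let us track the elements $\mathbf{i},\mathbf{j},\mathbf{k}\in Q_8$ and the generator of $\mathbb{Z}(4)$ through the later constructions.
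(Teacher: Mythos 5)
Your proposal is correct and follows essentially the same route as the paper: both simply verify the hypotheses (simple connectedness of $S^3$, free and properly discontinuous action from Lemma~\ref{superlegalaction} and Corollary~\ref{orbitspaces}) and then invoke the standard covering-space theorem \cite[Theorem 19.4]{kosnio} identifying $\pi(X/G)$ with $G$. Your extra remark that the isomorphism is canonical via deck transformations is a useful addition but does not change the argument.
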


\begin{proof}
We may apply \cite[Theorem 19.4]{kosnio}, that is, we have an orbit space $X/G$ produced by a properly discontinuous action of a group $G$ on a simply connected space $X$, then the fundamental group of the orbit space $X/G$ is isomorphic to the underlying group $G$.
\end{proof}

Groups acting on spheres deserve attention in literature. There are two conditions that a finite group $G$ acting freely on $S^n$ must satisfy, according to \cite[Page 75]{hatcher}. 

\begin{remark} \label{rem1}
The following  conditions are well known 
\begin{itemize}
\item[(a)] Every abelian subgroup of $G$ is cyclic. This is equivalent to saying that $G$ contains no subgroup $\mathbb{Z}(p) \times \mathbb{Z}(p)$ with $p$ prime.
\item[(b)] $G$ contains at most one element of order $2$.
\end{itemize}
\end{remark}

Because of Remark \ref{rem1}, it is not surprising that $\mathbb{Z}(4)$ acts freely on $S^3$ since it is a finite cyclic group of order four. On the other hand,  it is useful to note that:

\begin{remark} \label{rem2}
The dihedral group $D_8$ of order $8$ does not act freely on $S^3$ since it fails on condition (b) of Remark \ref{rem1}: There are in fact involutions in $D_8$. Structurally $D_8$ is very similar to $Q_8$, e.g. $[Q_8,Q_8] \cong [D_8,D_8] \cong \mathbb{Z}(2)$, $Z(Q_8) \cong Z(D_8) \cong \mathbb{Z}(2)$ and $Q_8/Z(Q_8) \cong D_8/Z(D_8) \cong \mathbb{Z}(2) \times \mathbb{Z}(2)$ but the presence of more involutions makes the difference between $Q_8$ and $D_8$.
\end{remark}

In fact Zimmermann and others \cite{zim3, zim1, zim2, zim4, zim5} developed the theory of  group actions on spheres in a series of fundamental contributions,  illustrating the properties that finite (or even infinite) groups must have in order to act on spheres of low dimension, i.e. $S^2$, $S^3$ and $S^4$.

\begin{remark} \label{rem3}It is useful to know that \cite{zim3, zim2, zim4, zim5}  explores the orientation-preserving topological actions on $S^3$ (and on the euclidean space $\mathbb{R}^3$), while  \cite{zim5} deals specifically with the following question: Is there a finite group $G$ which admits a faithful topological or smooth action on a sphere $S^d$ (of dimension $d$) but does not admit a faithful, linear action on $S^d$ ?  For each dimension $d>5$, there is indeed a finite group $G$ which admits a faithful topological action on $S^d$ (but $G$ is not isomorphic to a subgroup of the real  orthogonal group $O(d+1)$, see again \cite{zim5}). 
\end{remark}

Remarks \ref{rem1}, \ref{rem2} and \ref{rem3} show that the theory of actions of groups on spheres may be developed more generally than what is presented here for $S^3$. We indeed use this theory for our scopes.

\section{Central products and subdirect products}

We introduce some results of general nature on central products of groups, dividing the present section in two parts, one with more emphasis on the general constructions, and another one which is specific for the Pauli group.

\subsection{Classical properties of central products of groups} We need to fix some notation, mentioned in \cite{kosnio}. The inclusion map $i: U \to X$, where $U \subseteq X$, is the map that takes a point $x \in U$ to itself in $X$. And this can induce a homomorphism on fundamental groups that we denote by $i_*$. In the language of combinatorial group theory, if $X$ is a topological space; $U$ and $V$ are open, path connected subspaces of $X$; $U\cap V$ is nonempty and path-connected; $ w\in U\cap V$; then the natural inclusions
$i_1 : U \cap V \to U$, $i_2 : U \cap V \to V$, $j_1 : U \to X$ and $j_2 : V \to X$ for the following commutative diagram
 \[\begin{CD}
U \cap V @>i_1>> U \\
@Vi_2VV @VVj_1V\\
V @>j_2>> X  \\
\end{CD}\]
that induces another commutative diagram on the corresponding fundamental groups, given by 
 \[\begin{CD}
\pi(U \cap V,w) @>{(i_1)}_*>> \pi(U,w) \\
@V{(i_2)}_*VV @VV{(j_1)}_*V\\
\pi(V,w) @>{(j_2)}_*>> \pi(X,w)  \\
\end{CD}\]
Here it is possible to interpret $\pi(X,w)$ as the free product with amalgamation of $ \pi(U,w)$ and $ \pi(V,w)$ so that, given group presentations:
$$\pi(U,w)=\langle u_{1},\cdots ,u_{k} \ | \ \alpha _{1},\cdots ,\alpha _{l}\rangle=\langle S_1 \ | \ R_1 \rangle;$$ 
$$\pi(V,w)=\langle v_{1},\cdots ,v_{m} \ | \ \beta _{1},\cdots ,\beta _{n}\rangle=\langle S_2 \ | \ R_2 \rangle;$$
$$\pi(U\cap V,w)=\langle w_{1},\cdots ,w_{p} \ | \ \gamma _{1},\cdots ,\gamma _{p}\rangle=\langle S \ | \ R_3 \rangle ;$$
and one can  describe $\pi(X,w)$ in terms of generators and relators 
$$\pi(X,w)=\langle u_{1}, \ldots ,u_{k},v_{1},\cdots ,v_{m} \  |  $$
$$\alpha_1, \ldots ,\alpha_l, \beta_1, \ldots ,\beta_n, {(i_1)}_* (w_1) ({(i_2)}_*(w_{1}))^{-1}, \ldots , {(i_1)}_*(w_{p})({(i_2)}_*(w_{p}))^{-1} \rangle$$
$$=\langle S_1 \cup S_2 \ | \ R_1 \cup R_2 \cup R_S \rangle,$$ 
where $R_S$ is the set of words of the form $(i_1)_*(a)((i_2)_*(a))^{-1}$ with $a \in S$. See details in \cite[Chapters 23, 24, 25 and 26]{kosnio}. There is an alternative way to view a group presentation and that is as a quotient group of another group; it is well known fact in combinatorial group theory \cite{MaKaSo}. In order to give a lemma that shows us this let us first recall that the \textit{normal subgroup of a group generated by a set of elements} is the smallest normal subgroup containing these elements, or equivalently, it is the subgroup generated by the set of elements and their conjugates. 

\begin{lemma}\label{lem3}
Let $G$ have the presentation $\langle a,b,c,... \ | \ P,Q,R,... \rangle,$ where  $a,b,c, ...$ are generators of $G$ and $P=P(a,b,c, ...),Q=Q(a,b,c, ...),R=R(a,b,c, ...), ...$ are words in $G$ that give us the relators. Let $N$ be the normal subgroup of $G$ generated by the words $S(a,b,c,...), T(a,b,c,...), ...$ in $G$, then the quotient group $G/N$ has the presentation $\langle a,b,c, ... \ | \ P,Q,R, ...,S,T,... \rangle. $
\end{lemma}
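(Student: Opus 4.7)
The plan is to invoke the universal property of presentations (von Dyck's theorem) and then apply the third isomorphism theorem in a careful bookkeeping with normal closures. First I would introduce the free group $F$ on the alphabet $\{a,b,c,\ldots\}$, and let $R_0 \trianglelefteq F$ be the normal closure in $F$ of the set $\{P,Q,R,\ldots\}$, so that by hypothesis $G = F/R_0$. Let $H$ denote the group defined by the presentation $\langle a,b,c,\ldots \mid P,Q,R,\ldots,S,T,\ldots\rangle$; by definition, $H=F/R_1$, where $R_1 \trianglelefteq F$ is the normal closure of $\{P,Q,R,\ldots,S,T,\ldots\}$. My goal is to prove $G/N \cong H$ as groups in a way that is compatible with the identification of generators.

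Next I would exploit the inclusion $R_0 \subseteq R_1$, which holds because every relator defining $G$ is among the relators defining $H$. This inclusion induces a canonical surjective homomorphism $\varphi : G=F/R_0 \twoheadrightarrow F/R_1 = H$ sending each generator to itself, with $\ker\varphi = R_1/R_0$. The crucial identification is that this kernel coincides with $N$: indeed, $N$ is by definition the normal closure in $G$ of the images of $S,T,\ldots$, and a direct calculation (using that the quotient map $F\to G$ sends normal closures onto normal closures) shows that this coincides with the image in $G$ of the normal closure in $F$ of $\{S,T,\ldots\}\cup R_0$, which is exactly $R_1/R_0$.

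Once the equality $\ker\varphi = N$ is established, the third isomorphism theorem immediately gives $G/N = (F/R_0)/(R_1/R_0) \cong F/R_1 = H$, and the isomorphism sends the class of each generator in $G/N$ to the class of the same generator in $H$. This verifies that $G/N$ admits the announced presentation $\langle a,b,c,\ldots \mid P,Q,R,\ldots,S,T,\ldots\rangle$.

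I expect no serious obstacle here; the only delicate point is the bookkeeping identification $\ker\varphi = N$, which requires the standard, but easy to misstate, fact that if $\pi : F \to F/R_0$ is the canonical projection and $X\subseteq F$, then the normal closure of $\pi(X)$ in $F/R_0$ equals $\pi$ applied to the normal closure of $X\cup R_0$ in $F$. After this observation, the remainder is a routine application of the universal property and the isomorphism theorem, so the proof stays short.
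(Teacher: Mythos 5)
Your proof is correct: the identification $\ker\varphi = N$ via normal closures in the free group and the third isomorphism theorem is exactly the standard argument. The paper itself gives no proof here, deferring entirely to \cite[Theorem 2.1]{MaKaSo}; your write-up supplies precisely the details that reference contains, so there is nothing to add.
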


\begin{proof}
The proof and details  can be found in \cite[Theorem 2.1]{MaKaSo}.
\end{proof}

In particular, we can describe the quotients of the free groups and this gives a powerful tool, in order to think at arbitrary groups via appropriate quotients of free groups. 

\begin{corollary}\label{lem3bis}
If $F$ is the free group on $a,b,c, ...$ and $N$ is the normal subgroup of $F$ generated by $P(a,b,c, ...)$, $Q(a,b,c, ...)$, $R(a,b,c, ...)$, ..., then $F/N=\langle a,b,c, ... \ | \ P,Q,R, ... \rangle . $
\end{corollary}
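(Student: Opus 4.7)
The plan is to derive this corollary as an immediate specialization of Lemma \ref{lem3}, taking the ambient group to be the free group on $a, b, c, \ldots$ equipped with its tautological (relator-free) presentation.

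First I would invoke the universal property of the free group in order to write $F = \langle a, b, c, \ldots \mid \; \rangle$, that is, $F$ presented on the given generating set with an empty set of relators. This is the content of the definition of $F$ as a free group in the combinatorial-group-theory sense followed throughout \cite{MaKaSo}: every reduced word in $a, b, c, \ldots$ and their formal inverses represents a distinct element, so there are no non-trivial relations to impose.

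Next, I would apply Lemma \ref{lem3} with $G := F$, with original presentation $\langle a, b, c, \ldots \mid \; \rangle$, and with the additional normally-generating words chosen to be $P(a,b,c,\ldots), Q(a,b,c,\ldots), R(a,b,c,\ldots), \ldots$. The lemma directly produces
\[
F/N \;=\; \langle a, b, c, \ldots \mid P, Q, R, \ldots \rangle,
\]
where $N$ is the normal closure in $F$ of $\{P, Q, R, \ldots\}$. This is exactly the desired presentation, so the proof is complete.

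Since the corollary is a pure specialization of the preceding lemma, I do not expect any genuine obstacle. The only point deserving a brief sanity check is that the empty-relator convention for presentations behaves correctly when we append new relators via Lemma \ref{lem3}; this is automatic from the way the lemma is stated, since concatenating the empty set of initial relators with $\{P, Q, R, \ldots\}$ simply yields $\{P, Q, R, \ldots\}$. In short, the work is entirely loaded into Lemma \ref{lem3}, and the corollary records the philosophically important observation that every group is a quotient of a free group by a relator-generated normal subgroup.
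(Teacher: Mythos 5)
Your proposal is correct and matches the paper's intent: the paper simply cites \cite[Corollary 2.1]{MaKaSo}, which is itself obtained from \cite[Theorem 2.1]{MaKaSo} (the paper's Lemma \ref{lem3}) by exactly the specialization you describe, namely presenting $F$ with an empty relator set and appending $P, Q, R, \ldots$. Your write-up just makes explicit the one-line deduction that the paper delegates to the reference.
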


\begin{proof}
See  \cite[Corollary 2.1]{MaKaSo}.
\end{proof}

We reprove \cite[Theorem 19.1]{DoerHawk} via commutative diagrams, since the logic will be useful later on.

\begin{lemma} \label{lem1} 
Let $G_1,G_2$, and $H$ be groups, let $\varepsilon_i :G_i \to H$ be an epimorphism, and write $K_i=\ker(\varepsilon_i)(i=1,2)$. Let $D=G_1 \times G_2$ and $G=\{(g_1,g_2): g_i \in G_i \mbox{ and } \varepsilon_1(g_1)=\varepsilon_2(g_2) \}.$
Then $G$ is a subgroup of $D$, and there exist epimorphisms $\delta_i : G \to G_i$ such that
\item[(a)] $\ker(\delta_1)=G \cap G_2 \cong K_2$ and $\ker(\delta_2)=G \cap G_1 \cong K_1$,
\item[(b)] $\ker(\delta_1) \ker(\delta_2)=K_1 \times K_2$, and
\item[(c)] $G/(K_1 \times K_2) \cong H$.
\end{lemma}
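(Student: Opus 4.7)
The plan is to treat $G$ as a pullback (fibered product) of $G_1$ and $G_2$ over $H$, and then read off each conclusion by routine diagram chasing. First I would verify that $G$ is a subgroup of $D$: closure under products and inverses is immediate because each $\varepsilon_i$ is a homomorphism, so $\varepsilon_1(g_1 g_1'^{-1}) = \varepsilon_2(g_2 g_2'^{-1})$ whenever both $(g_1,g_2)$ and $(g_1',g_2')$ satisfy the defining equation. I then define $\delta_i : G \to G_i$ as the restriction to $G$ of the canonical projection $D \to G_i$. To check that each $\delta_i$ is surjective I use the surjectivity of $\varepsilon_j$ with $j\ne i$: given $g_1 \in G_1$, pick any $g_2 \in G_2$ with $\varepsilon_2(g_2) = \varepsilon_1(g_1)$, which forces $(g_1,g_2) \in G$ and $\delta_1(g_1,g_2)=g_1$.

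For (a), the kernel of $\delta_1$ is $\{(1,g_2) \in G\}$, and the defining condition reduces to $\varepsilon_2(g_2)=1$, i.e.\ $g_2 \in K_2$; identifying $G_2$ with the subgroup $\{1\}\times G_2 \le D$ yields $\ker(\delta_1) = G \cap G_2 \cong K_2$, and the same argument handles $\delta_2$. Part (b) is then immediate: the two kernels are $\{1\}\times K_2$ and $K_1\times\{1\}$ inside $D$, and their set-product is exactly $K_1 \times K_2$. Note also that every $(k_1,k_2)$ automatically lies in $G$, since $\varepsilon_1(k_1)=1=\varepsilon_2(k_2)$, so $K_1 \times K_2 \le G$; and $K_1 \times K_2$ is normal in $D$ (being a product of normal subgroups) and hence normal in $G$, which is what is needed in (c).

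For (c), I would introduce the map $\varphi : G \to H$ by $\varphi(g_1,g_2) = \varepsilon_1(g_1) = \varepsilon_2(g_2)$, well defined by the very definition of $G$, manifestly a homomorphism, and surjective because $\varepsilon_1$ already is. Its kernel consists of pairs $(g_1,g_2) \in G$ with $\varepsilon_1(g_1)=1$, which is precisely $K_1 \times K_2$, and the first isomorphism theorem gives $G/(K_1 \times K_2) \cong H$. The only real subtlety, rather than an obstacle, is to stay consistent about whether the symbols $G_i$ and $K_i$ denote the abstract groups or their canonical copies $\{1\}\times G_2$, $G_1\times\{1\}$, $\{1\}\times K_2$, $K_1\times\{1\}$ inside $D$; once these embeddings are fixed at the outset, the isomorphisms in (a) and the internal product in (b) become literal identifications rather than further content to prove.
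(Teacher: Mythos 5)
Your proof is correct and follows essentially the same route as the paper: the paper also defines $\delta_i$ as the restriction of the coordinate projection (writing $\delta_i=\pi_i\circ\iota$) and identifies $K_1=\ker(\delta_2)$, $K_2=\ker(\delta_1)$, though it merely exhibits the commutative diagram and leaves all verifications to the reader. Your write-up supplies the details (surjectivity of $\delta_i$ via surjectivity of $\varepsilon_j$, the kernel computations, and the map $\varphi(g_1,g_2)=\varepsilon_1(g_1)$ with the first isomorphism theorem for (c)) that the paper omits.
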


\begin{proof}From the assumptions, we may consider the following commutative diagram, where $\delta_i = \pi_i \circ \iota$ is clearly an epimorphism, and  check that the thesis is satisfied when $K_1=\ker (\delta_2)$ and $K_2=\ker (\delta_1)$;

\adjustbox{scale=1.5,center}{\begin{tikzcd}
G
\arrow[twoheadrightarrow, rr, bend left, "\delta_2"]
\arrow[twoheadrightarrow, dr,   "\delta_1"]
\arrow[hook, r, "\iota"] 
\arrow[hook, d, "\iota"]
& G_1 \times G_2 \arrow[twoheadrightarrow, r, "\pi_2"] \arrow[twoheadrightarrow, d, "\pi_1"]
& G_2 \arrow[twoheadrightarrow, d, "\varepsilon_2"] \\
G_1 \times G_2 \arrow[twoheadrightarrow, r, "\pi_1"]
& G_1 \arrow[twoheadrightarrow, r, "\varepsilon_1"]
& H \end{tikzcd}}
\end{proof}

Following \cite{DoerHawk}, the subgroup $G$ of $G_1 \times G_2$ constructed in Lemma \ref{lem1} is  a \textit{subdirect product of $G_1$ and $G_2$}. On the other hand, 

\begin{definition}\label{cp}
An arbitrary  group $C$ is a \textit{central product} of its subgroups $A$ and $B$, if $C=AB$ and $[A,B]=1$. 
\end{definition}

We use the notation $C=A \circ B$ when we are in a  situation like Definition \ref{cp}. Note that   both $A$ and $B$ are normal in $C$ in Definition \ref{cp}; moreover  $A \cap B \leqslant Z(A) \cap Z(B)$, where $Z(A)=\{a \in A \ | \ ax=xa \ \ \forall x \in A\}$ denotes the center of $A$.

\begin{lemma} \label{lem1bis} 
Let $A,B$ be subgroups of a group $G$, $D=A \times B$  and  $\bar{A}=A \times 1$ and $\bar{B}=1 \times B$. Then the following statements are equivalent:
\begin{itemize}
\item[(a)] $G$ is a central product of $A$ and $B$;
\item[(b)] There exists an epimorphism $\varepsilon: D \to G$ such that $\varepsilon(\bar{A})=A$, $\varepsilon(\bar{B})=B$; 
\end{itemize}
In particular, if $G$ is a central product of $A$ and $B$, then $G$  is  a subdirect product of $A$ and $B$.
\end{lemma}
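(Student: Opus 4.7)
My plan is to establish (a) $\Leftrightarrow$ (b) by means of the natural multiplication map $A \times B \to G$, and then to derive the concluding subdirect-product claim from Lemma \ref{lem1}.

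For (a) $\Rightarrow$ (b), I would define $\varepsilon\colon D = A \times B \to G$ by $\varepsilon(a,b) = ab$. The hypothesis $[A,B] = 1$ is exactly what makes $\varepsilon$ a homomorphism: for any $(a_1,b_1),(a_2,b_2) \in D$ one has
\[
\varepsilon\bigl((a_1,b_1)(a_2,b_2)\bigr) = a_1 a_2 b_1 b_2 = a_1 b_1 a_2 b_2 = \varepsilon(a_1,b_1)\varepsilon(a_2,b_2),
\]
using the commutation $a_2 b_1 = b_1 a_2$ inside $G$. Surjectivity is immediate from $G = AB$, and by construction $\varepsilon(\bar A) = A$ and $\varepsilon(\bar B) = B$.

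For the converse (b) $\Rightarrow$ (a), I would simply take images under the given $\varepsilon$. Surjectivity yields $G = \varepsilon(D) = \varepsilon(\bar A \bar B) = \varepsilon(\bar A)\varepsilon(\bar B) = AB$, while the element-wise commutativity $[\bar A,\bar B] = 1$ inside $D$ descends through $\varepsilon$ to $[A,B] = \varepsilon([\bar A,\bar B]) = \varepsilon(1) = 1$. Hence by Definition \ref{cp}, $G$ is a central product of $A$ and $B$.

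For the concluding clause, I would apply Lemma \ref{lem1} with $G_1 = A$, $G_2 = B$, together with a common quotient $H$ chosen so that the pull-back subgroup coincides with $G$ under the identification $(a,b) \mapsto ab$ supplied by (b). The natural candidate is a group $H$ of order $|A \cap B|$ realising the amalgamation $A \cap B \leq Z(A) \cap Z(B)$ (which is always available, as remarked after Definition \ref{cp}); then the kernel $\ker\varepsilon = \{(z,z^{-1}) : z \in A \cap B\}$ matches the $K_1 \times K_2$ of Lemma \ref{lem1}(b) up to a natural identification, and Lemma \ref{lem1}(c) gives $G/(K_1 \times K_2) \cong H$, recording the central amalgamation. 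I expect the main obstacle to be precisely the canonical choice of $H$ and of the epimorphisms $\varepsilon_i \colon G_i \to H$ — equivalently, the exhibition of a complement of $A$ (or of $B$) inside $G$ that splits the extension $1 \to A \to G \to B/(A\cap B) \to 1$. Once this splitting is in place, the subdirect-product status of $G$ follows at once from the universal property of the pull-back construction in Lemma \ref{lem1}.
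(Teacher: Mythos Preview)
Your argument for the equivalence (a) $\Leftrightarrow$ (b) is essentially identical to the paper's: the same multiplication map $\varepsilon(a,b)=ab$ for (a) $\Rightarrow$ (b), and the same image computation $G=\varepsilon(\bar A\bar B)=\varepsilon(\bar A)\varepsilon(\bar B)=AB$ together with $[A,B]=\varepsilon([\bar A,\bar B])=1$ for (b) $\Rightarrow$ (a).

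For the concluding subdirect-product clause the paper writes only ``we apply Lemma~\ref{lem1} with $G_1=A$ and $G_2=B$'', never specifying the target $H$ or the epimorphisms $\varepsilon_i$; you try to supply them and correctly flag the choice of $H$ as the obstacle. Your hesitation is warranted, because in fact the assertion as stated is false, and the Pauli group itself is the counterexample. By Lemma~\ref{equivpresentation} one has $P=Q_8\circ\mathbb{Z}(4)$ with $Q_8\cap\mathbb{Z}(4)=Z(Q_8)$, yet $P$ cannot be realised as \emph{any} subgroup of $Q_8\times\mathbb{Z}(4)$: the group $P$ contains seven involutions (namely $-\1$ and $\pm X,\pm Y,\pm Z$), whereas $Q_8\times\mathbb{Z}(4)$ contains only three (namely $(-1,0)$, $(1,2)$, $(-1,2)$, since $-1$ is the unique involution of $Q_8$ and $2$ the unique involution of $\mathbb{Z}(4)$). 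Hence $P$ is not a subdirect product of $Q_8$ and $\mathbb{Z}(4)$ in the sense introduced after Lemma~\ref{lem1}, and no pull-back data $(H,\varepsilon_1,\varepsilon_2)$ can produce it. Your proposed fallback via a splitting of $1\to A\to G\to B/(A\cap B)\to 1$ does not help either: such a complement does exist here (for instance $\langle X\rangle$ complements $Q_8$ in $P$), but a semidirect decomposition $G\cong A\rtimes C$ yields an epimorphism $G\to C$, not onto $A$, so it still fails to deliver the surjections $G\to A$ and $G\to B$ that a subdirect embedding would require. In short, the gap you sensed in your own argument cannot be closed, and the paper's one-line invocation of Lemma~\ref{lem1} does not close it either.
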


\begin{proof} This proof can be found in \cite[Lemma 19.4]{DoerHawk}, but we report it with a different argument which we will be useful later on.(a) $\Rightarrow$ (b): If $G=AB$ with $[A,B]=1$ then the map 
$\varepsilon: (a,b) \mapsto ab$
is clearly an epimorphism with the required properties. (b) $\Rightarrow$ (a): Since $\varepsilon$ is an epimorphism, we have $G=\varepsilon(D)=\varepsilon(A \times B)=\varepsilon(\bar{A}\bar{B})=\varepsilon(\bar{A})\varepsilon(\bar{B})=AB.$ Moreover, $[A,B]=[\varepsilon(\bar{A}),\varepsilon(\bar{B})]=\varepsilon([\bar{A},\bar{B}])=1$. So the first part of the result follows. In particular, we apply Lemma \ref{lem1} with $G_1=A$ and $G_2=B$, so the second part of the result follows.
\end{proof}

Lemma \ref{lem1bis} basically says that a group which can be written as the central product of two groups $A$ and $B$ must be necessarily isomorphic to a quotient of $A \times B$.

\subsection{Applications to the group of Wolfgang Pauli} 
Now one can focus on  the Pauli group $P$ and find an equivalent presentation for \eqref{presentation1}, involving $Q_8$ and $\mathbb{Z}(4)$.

\begin{lemma}\label{equivpresentation} The group $P$ can be presented by 
$$P=\langle u, xy, y \ | \ u^4=x^2=1, u^2=y^2, uy=yu, yx=xy, x^{-1}ux=u^{-1}\rangle.$$
Moreover $P=Q_8 \circ \mathbb{Z}(4)$, where $$Q_8=\langle u,xy \ | \ u^4=1, u^2=(xy)^2, (xy)^{-1}u(xy)=u^{-1} \rangle \ \  and  \ \ \mathbb{Z}(4)=\langle y \ | \ y^4=1 \rangle.$$
\end{lemma}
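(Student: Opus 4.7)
My plan is to prove the two claims in sequence: first establish that the stated presentation defines $P$, then read off the central product structure from that presentation.

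For the presentation equivalence, the approach is to exhibit mutually inverse homomorphisms between the canonical presentation \eqref{presentation1} and the new one. Motivated by the matrix identities $YZ=iX$ (so $(YZ)^2=-\1$) and $XYZ=i\1$ (which is central in $P$ by the commutator relations recorded just after \eqref{paulimatricesxyz}), I would define
\[
\varphi(u)=YZ,\qquad \varphi(x)=Y,\qquad \varphi(y)=XYZ,
\]
and, in the reverse direction,
\[
\psi(X)=yu^{-1},\qquad \psi(Y)=x,\qquad \psi(Z)=xu.
\]
Verifying that $\varphi$ respects the six defining relations $u^4=x^2=1$, $u^2=y^2$, $uy=yu$, $yx=xy$, $x^{-1}ux=u^{-1}$ is a direct computation inside $P$ using only $Y^2=Z^2=\1$, $(YZ)^4=\1$, and the centrality of $XYZ$. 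Verifying $\psi$ respects the three involutive and three quartic relations of \eqref{presentation1} is equally short; the two useful intermediate identities in the abstract group are $ux=xu^{-1}$ and $(xy)^2=x^2y^2=u^2$, both immediate from $x^{-1}ux=u^{-1}$ and $yx=xy$. Evaluating $\varphi\circ\psi$ and $\psi\circ\varphi$ on generators then shows that they are mutually inverse, so the two presentations define isomorphic groups.

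For the central product decomposition, I would work inside $P$ with the new generators. Set $A=\langle u,\, xy\rangle$ and $B=\langle y\rangle$. The computations $(xy)^2=u^2$ and $(xy)^{-1}u(xy)=y^{-1}u^{-1}y=u^{-1}$ (using $yx=xy$, $uy=yu$, and $x^{-1}ux=u^{-1}$) exhibit $A$ as a quotient of the standard presentation of $Q_8$; since the first step identifies $u$ with $YZ\in P$, an element of order $4$, we conclude $|A|=8$ and $A\cong Q_8$. The relations $y^4=u^4=1$ and $y^2=u^2\neq 1$ give $B\cong \mathbb{Z}(4)$. The generator $y$ commutes with $u$ and with $xy$, so $[A,B]=1$; and since $x=(xy)y^{-1}\in AB$, all three generators $u,x,y$ lie in $AB$, hence $P=AB$. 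Finally $A\cap B=\langle u^2\rangle$ has order $2$, so $|AB|=|A||B|/|A\cap B|=8\cdot 4/2=16=|P|$, confirming $P=A\circ B$ in the sense of Definition~\ref{cp}.

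The main obstacle is really the first step: the relations of the new presentation, taken on their own, do not \emph{a priori} rule out a collapse (for instance forcing $u^2=1$), and without knowing that $u$ has order exactly $4$ one cannot conclude $|A|=8$ in the second step. The two-sided isomorphism with \eqref{presentation1} is precisely what certifies both the order of $u$ and that the abstract group has order $16$, which is why the presentation equivalence must be done \emph{before} reading off the central product structure. Everything after that is a routine chain of Tietze-style manipulations.
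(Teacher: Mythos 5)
Your proposal is correct and follows essentially the same route as the paper: a change of generators expressing $u$, $x$, $y$ as words in $X,Y,Z$, a relation-chasing verification that the two presentations agree, and then the reading-off of $P=Q_8\circ\mathbb{Z}(4)$ from the commuting subgroups $\langle u,xy\rangle$ and $\langle y\rangle$. The only differences are cosmetic or mild improvements: you take $u=YZ$ where the paper takes $u=XY$ (immaterial by the symmetry of \eqref{presentation1}); you set up explicit mutually inverse homomorphisms $\varphi,\psi$, which makes the equivalence of presentations two-sided where the paper only derives the relations of \eqref{presentation1} from the new ones; and you certify the central product internally via Definition \ref{cp} together with the order count $|AB|=|A||B|/|A\cap B|$, where the paper instead exhibits the epimorphism $\varepsilon:Q_8\times\mathbb{Z}(4)\to P$, $(a,b)\mapsto ab$, and appeals to Lemma \ref{lem1bis}. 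Your closing remark is well taken: both arguments ultimately rely on knowing that the presented group has order $16$ and that $u$ has order $4$, facts which the paper, like you, imports from the matrix realization and the identities recorded after \eqref{paulimatricesxyz}.
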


\begin{proof}Consider $P$ as in \eqref{presentation1} and define $$u=XY, \ \ x=Y \ \ \mbox{and} \ \ y=XYZ.$$
Then we need to show that the relations in \eqref{presentation1} can be generated by the relations in the thesis.

First of all we will derive the equations $X^2=Y^2=Z^2=1$. We begin to note that
\begin{equation}\label{first} 
x^2 =1 \ \Longrightarrow \ Y^2=1,
\end{equation}
so one equation in\eqref{presentation1} is obtained and \eqref{first} allows us to conclude that
\begin{equation}\label{second} 
x^{-1}ux =u^{-1} \ \Longrightarrow \ Y^{-1}XYY=(XY)^{-1} \ \Longrightarrow \ Y^{-1}XYY=Y^{-1}X^{-1}
\ \Longrightarrow \ XYY=X^{-1}
\end{equation}
$$ XXYY=1  \ \Longrightarrow \ X^2Y^2=1 \ \Longrightarrow \ X^21=1
\ \Longrightarrow \ X^2=1$$
so a second equation in \eqref{presentation1} is obtained.
Now we need to show that
 \begin{equation}\label{third} 
 uy=yu \ \Longrightarrow \  XYXYZ=XYZXY \ \Longrightarrow  \ XYZ=ZXY,
\end{equation}
in order to derive the following equation
 \begin{equation}\label{four} 
u^2 =y^2 \ \Longrightarrow \ XYXY=XYZXYZ \ \Longrightarrow \ XY=ZXYZ 
\end{equation}
$$ \Longrightarrow \ XY=XYZZ \ \Longrightarrow \
1=Z^2$$
So we have shown until now that $X^2=Y^2=Z^2=1$.

Now we go ahead to show the equations $(YZ)^4=(ZX)^4=(XY)^4=1$. We begin with
 \begin{equation}\label{five} 
u^4=1 \ \Longrightarrow \ (XY)^4=1,
\end{equation}
then with help of \eqref{third} we derive
 \begin{equation}\label{five} 
yx=xy \ \Longrightarrow \  XYZY=YXYZ \ \Longrightarrow \ ZXYY=YXYZ\end{equation}
$$ \ \Longrightarrow \ ZXYY=YZXY \ \Longrightarrow \ ZXY=YZX.$$
Now \eqref{third} allows us to have two more equations, namely
 \begin{equation}\label{six} 
  ZXY=YZX \ \Longrightarrow \   ZXYX=YZXX  \ \Longrightarrow \ ZXYX=YZ,
\end{equation}
 \begin{equation}\label{seven} 
  ZXY=YZX  \ \Longleftrightarrow \   ZX=YZXY \ \Longrightarrow \  ZX=YXYZ.
\end{equation}
Therefore we find
 \begin{equation}\label{eight} 
(YZ)^4=YZYZYZYZ=ZXYXZXYXZXYXZXYX=(ZXY)XZXYXZXYXZXYX
\end{equation}
$$=(YZX)XZXYXZXYXZXYX=YZXXZXYXZXYXZXYX=YXYXYZZXYX$$
$$=YZZXYXZXYXZXYX=Y(ZZ)XYXZXYXZXYX=YXYXZXYXZXYX$$
$$=YXYX(ZXY)XZXYX=YXYX(YZX)XZXYX=YXYXYZXXZXYX$$
$$=YXYXYZ(XX)ZXYX=YZ(XX)ZXYXZXYXZXYX$$
$$=YXYXY(ZZ)XYX=YXYXYXYX=Y(XYXYXY)X$$
$$=Y(XY)^3X=Y(XY)^{-1}X=YY^{-1}X^{-1}X=1$$
and in analogy  $(ZX)^4=1$. This allows us to conclude that \eqref{presentation1} is equivalent to the presentation in the thesis. 
Now we consider the map $$\varepsilon : (a,b )\in Q_8 \times \mathbb{Z}(4) \mapsto ab \in P$$
and  we note that $xy=yx$ in $P$, hence for all $\alpha, \beta, \gamma, \delta \in \{0,1,2,3\}$ we have
$$\varepsilon(u^\alpha, y^\beta)  \varepsilon({(xy)}^\gamma, y^\delta) = (u^\alpha y^\beta) ({(xy)}^\gamma y^\delta)$$
$$=u^\alpha y^\beta x^\gamma y^{\gamma+\delta}=u^\alpha  x^\gamma y^{\beta+\gamma+\delta}=  \varepsilon(u^\alpha{(xy)}^\gamma, y^{\beta+\delta})=\varepsilon((u^\alpha, y^\beta)({(xy)}^\gamma, y^\delta)),
$$
which is enough to conclude that $\varepsilon$ is homomorphism of groups, because we checked on the generic generators of $Q_8 \times \mathbb{Z}(4)$. Finally, $\varepsilon$ is surjective by construction, so $P=Q_8 \circ \mathbb{Z}(4)$ by Lemma \ref{lem1bis}.
\end{proof}

Viceversa the notion of central product can allow us to construct presentations. 

\begin{lemma}\label{lem2}If $G=A \circ B$ with $A=\langle S \ | \ R_A \rangle $ and $B=\langle T \ | \ R_B \rangle $ are presentations for $A$ and $B$, then $G=\langle S \cup T \ | \ R_A \cup R_B \cup R_C \cup R_\varepsilon \rangle,$ where $R_C=\{ (a,b)\in S \times T \ | \ ab=ba \}$ and $R_\varepsilon=\{  (a,b)\in S \times T \ | \ ab=1 \}$.
\end{lemma}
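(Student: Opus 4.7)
The strategy is to realize $G = A \circ B$ as a quotient of the direct product $A \times B$ by the kernel of the canonical central-product epimorphism, write down a presentation of $A \times B$ built from the given ones of $A$ and $B$, and then adjoin the defining relations of that quotient using Lemma~\ref{lem3}.

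First I would invoke Lemma~\ref{lem1bis} to produce the surjective homomorphism $\varepsilon : A \times B \to G$, $(a,b) \mapsto ab$, so that $G \cong (A \times B)/\ker(\varepsilon)$. A standard fact in combinatorial group theory (which one may derive directly from Corollary~\ref{lem3bis} applied to the free group on the disjoint union $S \cup T$) is that the direct product is obtained from the free product by forcing the two factors to commute, yielding
\begin{equation*}
A \times B = \langle S \cup T \mid R_A \cup R_B \cup R_C \rangle,
\end{equation*}
where $R_C$ encodes the commutation of every $a \in S$ with every $b \in T$.

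Next I would identify the kernel. Since $\varepsilon$ restricts to the identity on the canonical copies $\bar{A}=A\times 1$ and $\bar{B}=1\times B$, we have $\ker(\varepsilon) \cap \bar{A} = \ker(\varepsilon) \cap \bar{B} = 1$, and $\ker(\varepsilon) \subseteq Z(A\times B)$; more precisely $\ker(\varepsilon) = \{(a,b)\in A\times B : ab=1 \text{ in } G\}$, which exactly detects the identifications that must be performed in $A\cap B$. Reading this inside the presentation of $A\times B$, the corresponding words in $S \cup T$ are precisely those of the form $ab$ with $a\in S$, $b\in T$, subject to $ab=1$, i.e.\ the set $R_\varepsilon$. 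Adjoining $R_\varepsilon$ via Lemma~\ref{lem3} then gives the desired presentation
\begin{equation*}
G \cong \langle S \cup T \mid R_A \cup R_B \cup R_C \cup R_\varepsilon \rangle.
\end{equation*}

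The main obstacle is verifying that $R_\varepsilon$, when restricted to generator-level pairs in $S\times T$, is enough to normally generate $\ker(\varepsilon)$ modulo $R_A \cup R_B \cup R_C$. This rests on two ingredients: the relations in $R_A \cup R_B$ already allow every element of $A$ (respectively $B$) to be rewritten as a word in $S$ (respectively $T$), and the centrality of $\ker(\varepsilon)$ in $A\times B$ means that normal closure reduces to ordinary subgroup generation. Once this point is settled, Lemma~\ref{lem3} (applied to the free group on $S\cup T$ modulo the normal subgroup generated by $R_A\cup R_B\cup R_C\cup R_\varepsilon$) closes the argument.
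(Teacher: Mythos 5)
Your proposal follows exactly the route of the paper's own proof: Lemma \ref{lem1bis} supplies the epimorphism $\varepsilon:(a,b)\mapsto ab$, the direct product is presented as $\langle S\cup T\mid R_A\cup R_B\cup R_C\rangle$, and Lemma \ref{lem3} adjoins the relations carried by $\ker(\varepsilon)$. The step you single out as ``the main obstacle'' is precisely the step the paper also dispatches in one sentence, and your two ingredients do not close it. Centrality of $\ker(\varepsilon)$ does reduce normal closure to ordinary generation, and $\ker(\varepsilon)\cong A\cap B$ via $(z,z^{-1})\leftrightarrow z$; but what is then required is that $A\cap B$ be generated by elements $z$ which are simultaneously a \emph{single generator} $a\in S$ and the inverse of a \emph{single generator} $b\in T$, since $R_\varepsilon$ is by definition a subset of $S\times T$. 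Nothing in the hypotheses guarantees this, and it can fail: take $A=\langle x\mid x^4=1\rangle$, $B=\langle y\mid y^4=1\rangle$, and let $G=A\circ B$ be the central product of order $8$ identifying the involutions, so that $x^2=y^2$ in $G$. No pair $(a,b)\in S\times T$ satisfies $ab=1$ in $G$ (that would force $y=x^{-1}$ and $|G|=4$), so $R_\varepsilon=\emptyset$ and the proposed presentation $\langle x,y\mid x^4=y^4=1,\ xy=yx\rangle$ defines $\mathbb{Z}(4)\times\mathbb{Z}(4)$ of order $16$, not $G$.

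The statement therefore needs a repair before your closing appeal to Lemma \ref{lem3} is legitimate: either add the hypothesis that $A\cap B$ is generated by elements of $S$ whose inverses are images of elements of $T$, or enlarge $R_\varepsilon$ to the set of all pairs of \emph{words} $(w,w')$, $w$ in $S$ and $w'$ in $T$, with $ww'=1$ in $G$. The enlarged version is what is actually used in the proof of Theorem \ref{maintheorem}, where the identification imposed is $u^2=y^2$, a relation between words rather than between generators. With that correction your argument (identification of the central kernel, reduction of normal closure to generation, then Lemma \ref{lem3}) goes through, and it is then genuinely more complete than the paper's one-line justification.
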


\begin{proof}
From Lemma \ref{lem1bis} if we have the epimorphism $\varepsilon: (a,b) \in A \times B \mapsto ab \in G$, then $G=(A \times B)/\ker(\varepsilon)$. A  presentation for $A \times B$ is of the form $\langle S \cup T \ | \ R_A \cup R_B \cup R_C \rangle$, because in the definition of $A \times B$ we require $[A,B]=1$. Now $\ker (\varepsilon)$ induces the additional relation $R_\varepsilon$ and so $(A\times B)/\ker(\varepsilon)$ is presented as claimed, because of Lemma \ref{lem3}.
\end{proof}

\section{Proof of  Theorem 1.1}

We begin to  prove Theorem \ref{maintheorem}.

\begin{proof}Case (i). We refer to \cite[Chapter 23]{kosnio} for the formulation of the theorem of Seifert and Van Kampen, and the corresponding  terminology has been reported in Paragraph 3.1 above exactly as in \cite[Chapter 23]{kosnio}.  First of all we note that we are in the assumptions of the theorem of Seifert and Van Kampen because of Lemmas \ref{superlegalaction} and \ref{superlegalactiontwo} and Corollary \ref{orbitspaces}. In addition, these results show that  $U$, $V$, $U \cap V$, $X$ are compact and path connected spaces. Therefore $\pi(X,x_0)$ is not dependent on the choice of $x_0 \in U \cap V$. Now we construct $\pi(X)$ directly:  
$$\pi(U)=Q_8=\langle u,xy \ | \ u^4=1, u^2=(xy)^2, (xy)^{-1}u(xy)=u^{-1} \rangle=\langle S_1 \ | \ R_1 \rangle,$$ 
$$\pi(V)=\mathbb{Z}(4)=\langle y \ | \ y^4=1 \rangle=\langle S_2 \ | \ R_2 \rangle,$$ 
$$\pi(U \cap V)=\mathbb{Z}(2)=\langle u^2 \ | \ u^4=1 \rangle \subseteq \pi(U),$$
$$ \pi(U \cap V)=\mathbb{Z}(2)=\langle y^2 \ | \ y^4=1 \rangle \subseteq \pi(V)$$ and one can check that 
$$(i_1)_* :  u^2 \in \pi(U \cap V) \subseteq \pi(U) \mapsto (i_1)_*(u^2)=u^2 \in \pi(U),$$
$$(i_2)_* : y^2 \in \pi(U \cap V) \subseteq \pi(V) \mapsto (i_2)_*(y^2)=y^2 \in \pi(V)$$ are homomorphisms in accordance to the theorem of Seifert-Van Kampen. Moreover $$S=\{u^2 \ | \ u \in \pi(U)\}=\{y^2 \ | \ y \in \pi(V)\}$$ is the set of generators of $\pi(U \cap V)$, but $\mathbb{Z}(2)$  has only one generator so we may deduce that $$R_S = \{u^2y^2=1 \ | \ u\in \pi(U), y\in \pi(V)\}=\{u^2=y^2 \ |  \ u\in \pi(U), y\in \pi(V)\}$$ are the relations on $\pi(X)$ induced by $\pi(U \cap V)$ from the theorem of Seifert and van Kampen.

We conclude that $$\pi(X)=\langle S_1 \cup S_2 \ | \ R_1 \cup R_2 \cup R_S \rangle = \langle u,xy, y \ | \ u^4=y^4=1, u^2=(xy)^2, (xy)^{-1}u(xy)=u^{-1},  u^2=y^2  \rangle.$$ Let $N$ be the normal subgroup generated by  $[S_1,S_2]=\{s_1^{-1}s_2^{-1}s_1s_2 \ | \ s_1 \in S_1, s_2 \in S_2\}$ in $\pi(X)$. By Lemma \ref{lem3} we get the following presentation for the quotient group:   
 
\begin{equation} \label{seifquo}
\pi(X)/N=\langle S_1 \cup S_2 \ | \ R_1 \cup R_2 \cup R_3 \cup R_S \rangle
\end{equation} 
 $$= \langle u,xy, y \ | \ u^4=y^4=1, u^2=(xy)^2, (xy)^{-1}u(xy)=u^{-1},  u^2=y^2, uy=yu,  xyy=yxy  \rangle$$
 where 
 $$R_3=\{s_1s_2=s_2s_1 \ | \ s_1 \in S_1, s_2 \in S_2\},$$ and we claim that \eqref{seifquo} is equivalent to 
 the following presentation of Lemma \ref{equivpresentation}
\begin{equation} \label{pauli}
P=\langle u, xy, y \ | \ u^4=x^2=1, u^2=y^2, uy=yu, yx=xy, x^{-1}ux=u^{-1}\rangle.
\end{equation}

Since \eqref{seifquo} and \eqref{pauli} have the same  generators, the relations in \eqref{pauli} will be deduced from the relations in \eqref{seifquo} and viceversa. Firstly we consider \eqref{seifquo} and note that
$$xyy=yxy \iff xy^2=yxy \iff xy^2y^3=yxyy^3 \iff xy^5=yxy^4 $$
$$\iff xyy^4=yxy^4 \iff xy=yx$$
and so $x$ and $y$ commute. Similarly one can see that $u$ commutes with $y$.
Secondly we have
$$u^2=(xy)^2 \iff u^2=x^2y^2 \iff u^2=x^2u^2 \iff u^2u^2=x^2u^2u^2 \iff u^4=x^2u^4 \iff 1=x^2.$$
Thirdly we note that
$$(xy)^{-1}u(xy)=u^{-1} \iff x^{-1}y^{-1}uxy=u^{-1} \iff x^{-1}y^{-1}uyx=u^{-1} $$$$\iff x^{-1}y^{-1}yux=u^{-1} \iff x^{-1}ux=u^{-1}$$
Finally all the other relations in \eqref{seifquo} are clearly present in Lemma \ref{equivpresentation} so our claim follows and
we may conclude that   $P \cong \pi(X)/N$. Now we apply Lemma \ref{lem2} and realize that $P=\pi(U) \circ \pi(V)$.

\bigskip

Case (ii). From \cite[Exercise 11.2 (d)]{kosnio} both $U$ and $V$ are 3-manifolds and so \cite[Exercise 11.5 (b)]{kosnio} implies that $U \# V$ is a 3-manifold. 
Now we observe from \cite{bubuiv} that the action of $Q_8$ on the Riemannian sphere $S^3$ (with the round metric $d_{S^3}$ that is derived from the Riemannian metric $$ds^2=\frac{4 \|dx\|^2}{(1+\|x\|)^2},$$ where $\|dx\|^2$ is the usual Riemannian metric on $\mathbb{R}^2$) produces the Riemannian space of orbits $U=S^3/Q_8$, with canonical quotient map $p_U: S^3 \to U$ with the induced distance function on $u_1,u_2 \in U$,
$$d_U(u_1,u_2) = \inf\limits_{x \in p_U^{-1}(u_1),y \in p_U^{-1}(u_2)} d_{S^3}(x,y)$$ 
and similarly this happens for $V$ with $d_V$, because the actions are free and properly discontinuous (see Lemma \ref{superlegalaction}), as well as $Q_8$ and $\mathbb{Z}(4)$ being groups of isometries on $S^3$ (see \cite[Remark 41]{thesis}). 

According to the terminology of \cite[Definition 22]{thesis},  a pair $(M,\Gamma)$ where $M$ is a Riemannian manifold and $\Gamma$ is a (proper) discontinuous group of isometries acting effectively on $M$ is called ``good Riemannian orbifold''. The underlying space of the orbifold is $M/\Gamma$. In the case of a good Riemannian orbifold $(M,\Gamma)$ it follows that for $x,y \in M/\Gamma$, $$d(x,y)=d_M(\pi^{-1}(x),\pi^{-1}(y)) := \inf\limits_{\tilde{x} \in \pi^{-1}(x),\tilde{y} \in \pi^{-1}(y)} d_M(\tilde{x},\tilde{y})$$
and this is exactly the situation we have here with $M=S^3$ and $\Gamma$ as indicated in Lemmas \ref{superlegalaction} and \ref{superlegalactiontwo}. This means that we can define a metric on the disjoint union of the subspaces $U'$ and $V'$ (which are $U$ and $V$ without the open balls of a connected sum construction) by:
\[   d'(x,y) = \left\{
\begin{array}{ll}
      d_{U'}(x,y), & if x,y \in U' \\
      d_{V'}(x,y), & if x,y \in V' \\
      \infty, & \text{otherwise}.
\end{array} 
\right. \]
Using this metric we can endow $U \# V$ with the structure of Riemannian space and consider the quotient semi-metric on $ U \# V$ in the sense of   \cite[Definition 3.1.12]{bubuiv} 
$$d_R(x,y)=\inf \{\sum\limits^k_{i=1} d'(p_i,q_i) \ | \ p_1=x, q_k=y, k \in \mathbb{N}  \},$$ 
where $R$ is the equivalence relation induced by $\#$. In particular,  $d_R(x,y)$ is a Riemannian metric on $U \# V$, since $U' \cup V'$ is compact (see \cite[Exercise 3.1.14]{bubuiv}).

From \cite[Exercise 26.6(c)]{kosnio} we have  $\pi(U \# V) \simeq \pi(U) \ast \pi(V)$ and so $$\pi(U \# V)=Q_8 \ast \mathbb{Z}(4)=\langle u,xy,y \ | \ u^4=1, u^2=(xy)^2, (xy)^{-1}u(xy)=u^{-1}, y^4=1\rangle.$$
Imposing the relations  $uy=yu,xyy=yxy,u^2=y^2$, we consider the existence of a normal subgroup $L$ in $\pi(U \# V)$ by Lemma \ref{lem3}, and  get again the presentation \eqref{seifquo}, which we have seen to be equivalent to that in Lemma \ref{equivpresentation}. Therefore  $P=\pi(U) \circ \pi(V)$ and Case (ii)  follows completely.

\end{proof}

It is appropriate to make some comments here on the choice of $S^3$ in the context of the present investigations. Looking at \cite[Chapters 1, 4]{hatcher} or at \cite[Chapter 29]{kosnio}, we know that a path connected space whose fundamental group is isomorphic to a given group $G$ and which has a contractible universal covering is a \textit{$K(G,1)$ space},  also known as \textit{Eilenberg-MacLane space} (of type one). Roughly speaking, these topological spaces answer the problem of realizing a prescribed group $G$ as fundamental group of an appropriate path connected space $X$. Eilenberg-MacLane spaces are well known in algebraic topology, so one could wonder why we didn't use them.

A first motivation is that $X$ is constructed as a polyhedron (technically, $X$ is a \textit{cellular complex}, see \cite[Chapter 0]{hatcher}) so it is in general difficult to argue whether $X$ possesses a Riemannian structure or not, when we construct $X$ with the method of Eilenberg and Mac Lane. Our direct observation of the properties of $S^3$ allows us to find an interesting behaviour, looking at the final part of the proof of Theorem \ref{maintheorem}: the Pauli group can be constructed in the way we made  and it can be endowed by a Riemannian structure, arising from the Riemannian metric which we introduced.

Then we come to a second motivation which justifies our approach via $S^3$. Finding a Riemannian structure, in connection with a group, has a relevant meaning in several models of quantum mechanics. For instance, Chepilko  and  Romanenko \cite{cr} produced a series of contributions, illustrating how certain processes of quantization and some sophisticated variational principles may be easily understood in presence of 
Riemannian manifolds and groups. In this perspective one can also look at \cite{pv}, which shows again the strong simplification of the structure of the hamiltonian in presence of models where we have both a Riemannian manifold and a group of symmetries. There are of course more examples on the same line of research and this shows  the relevance of our construction once more.

Last (but not least) we know from \cite[Corollary 9.59, (iv), The Sphere Group Theorem]{hofmor} that:  the only groups on a sphere are either the two element group $S^0=\{1,-1\}$, or the circle group $S^1$, or the group $S^3$ of quaternions of norm one (note that $S^3$ may be identified with $ SU(2)$ and we gave details in Section 2 of the present paper). Note that all of them are compact connected Lie groups (apart from $S^0$ which is disconnected) and we know from Lie theory that a  structure of differential manifold can be introduced on compact connected Lie groups via the Baker--Campbell--Hausdorff formalism \cite[Chapters 5 and 6]{hofmor}. 
Therefore the relevance of $S^3$ appears again among all the possible spheres $S^n$, which possesses a group structure, because it allows us to produce also a differential structure on $P$, realized in the way we made.

We end with an observation which was surprising in our investigation. 

\begin{remark}\label{rem4} In \cite{rocchetto}, it was discussed the decomposition $P=D_8 \circ \mathbb{Z}(4)$ with an appropriate study of the abelian subgroups of $P$ arising from this decomposition; so not only $P=Q_8 \circ \mathbb{Z}(4)$ is true. On the other hand,  $D_8$ cannot act freely on any $S^n$ by Remarks \ref{rem1} and \ref{rem2}, so  one cannot replace the role of $Q_8$ with $D_8$ in the proof of Theorem \ref{maintheorem}, even if algebraically $Q_8$ and $D_8$ are very similar. This is a further element of interest for the methods that we offered here.
\end{remark}

\section{Proof of  Theorem 1.2 and connections with physics}

In \cite{bagrus1,bagrus2} connections between  some purely algebraic results and physics, and Quantum Mechanics in particular, have been considered. The bridge between the two realms was provided by the so-called {\em pseudo-bosons}, studied intensively in { a series of recent contributions \cite{bagweak, baginbagbook} }. In particular, we refer to \cite{ baginbagbook} for a (relatively) recent review. In this perspective, it is natural to see if and how the Pauli group $P$ is related to mathematical objects which, in some sense, are close to pseudo-bosons. As we will show here, this is exactly the case: pseudo-fermions, which are a sort of two-dimensional version of pseudo-bosons, can be used to describe the elements $X$, $Y$ and $Z$ of $P$ in \eqref{paulimatricesxyz}, and, because of this, they appear to have a direct physical meaning. We refer to \cite{baginbagbook,bagpf1,bagpf2} for the general theory of pseudo-fermions, and to \cite{baggarg} for several physical applications of these excitations. We should also mention that similar {\em pseudo-particles} have been considered by several authors in the past decades. We only cite here few contributions, \cite{male,moh,gre}.  To keep the paper self-contained, we have also given a crash course on pseudo-fermions in  Appendix. 

The idea is quite simple: we consider two operators $a$ and $b$ on the Hilbert space $\Hil=\mathbb{C}^2$ satisfying the following rules:
\begin{equation} \label{51}
\{a,b\}=ab+ba=\1, \qquad a^2=b^2=0,
\end{equation} 
where $\{a,b\}$ is the anticommutator between $a$ and $b$, and $\1$ is the identity operator on $\Hil$. Of course, if $b=a^\dagger$, the adjoint of $a$, (\ref{51}) returns the so-called {\em canonical anti-commutation relations}, CAR. The operators $a$ and $b$ are the basic ingredients now to define the following operators on $\Hil$:
\begin{equation} \label{52}
\mu_1=b+a, \qquad \mu_2=i(b-a), \qquad \mu_3=[a,b]=ab-ba.
\end{equation} 
In particular, the square brackets are called {\em the commutator} between $a$ and $b$. Incidentally we observe that $\{a,b\}=\1$, because of the (\ref{51}).  The main result of this section is that the set $P_\mu=\{\mu_1,\mu_2,\mu_3\}$ is a concrete realization of the Pauli group. The proof of this claim is based on several identities which can easily be deduced out of (\ref{51}). More in details, we have that
\begin{equation} \label{53}
\mu_j^2=\1, \quad j=1,2,3, \qquad \mbox{ and } \qquad \mu_1\mu_2=i\mu_3, \quad  \mu_2\mu_3=i\mu_1, \quad  \mu_3\mu_1=i\mu_2.
\end{equation} 
In fact we have
$$
\mu_1^2=(b+a)^2=b^2+ba+ab+a^2=\{a,b\}=\1,
$$
since $a^2=b^2=0$ and $\{a,b\}=\1$. Similarly we can check that $\mu_2^2=\1$. Slightly longer is the proof that  $\mu_3^2=\1$. We have
$$
 \mu_2^2=(ab-ba)^2=abab+baba-abba-baab=(\1-ba)ab+(\1-ab)ba=ab+ba=\1,
$$
where we have used several times the equalities in (\ref{51}). The proof of the other equalities in (\ref{51}) is similar, and will not be given here. However, these equalities are relevant to prove that, indeed, $P_\mu$ is a Pauli group. Infact, we have $(\mu_1\mu_2)^4=(i\mu_3)^4=i^4(\mu_3^2)^2=\1^2=\1$. Similarly we can check that $(\mu_2\mu_3)^4=(\mu_3\mu_1)^4=\1$, so that our claim is proved.

$P_\mu$ is not the only Pauli group which can be constructed out of pseudo-fermionic operators. In fact, $P_\rho=\{\rho_j=\mu_j^\dagger, \, j=1,2,3\}$ is also a Pauli group, meaning with this that the following equalities are all satisfied: $$ \rho_j^2=\1,  \ \ j=1,2,3,   \ \ \mbox{and} \ \ \rho_1\rho_2=i\rho_3, \ \   \rho_2\rho_3=i\rho_1, \ \ \mbox{and} \ \ \rho_3\rho_1=i\rho_2.$$

In \cite{baggarg,baginbagbook} many applications of pseudo-fermions to physics have been discussed. This suggests that the theory developed here is somehow connected to physics, and in particular to Quantum Mechanics, as the following result connected to a two-levels atom with damping clearly shows.

In 2007 \cite{tripf}, an effective non self-adjoint hamiltonian describing a two level atom interacting with an electromagnetic field was analyzed in connection with pseudo-hermitian systems, \cite{ben}. Later (see \cite{bagpf1}) it has been shown that this model can be  rewritten in terms of pseudo-fermionic operators, and, because of what discussed in this section, in terms of Pauli groups.

\begin{proof}[Proof of Theorem 1.2]The starting point is the Schr\"odinger
equation 

\begin{equation} i\dot\Phi(t)=H_{eff}\Phi(t), \qquad H_{eff}=\frac{1}{2}\left(
\begin{array}{cc}
	-i\delta & \overline{\omega} \\
	\omega & i\delta \\
\end{array}
\right).
\label{triex}\end{equation} Here $\delta$ is a real quantity, related to the decay rates for the two levels, while the complex parameter $\omega$
characterizes the radiation-atom interaction. We refer to \cite{tripf} for further details. It is clear that $H_{eff}\neq H_{eff}^\dagger$. It
is convenient to write $\omega=|\omega|e^{i\theta}$. Then, we introduce the  operators
$$
a=\frac{1}{2\Omega}\left(
\begin{array}{cc}
-|\omega| & -e^{-i\theta}(\Omega+i\delta) \\
e^{i\theta}(\Omega-i\delta) & |\omega| \\
\end{array}
\right), \quad
b=\frac{1}{2\Omega}\left(
\begin{array}{cc}
-|\omega| & e^{-i\theta}(\Omega-i\delta) \\
-e^{i\theta}(\Omega+i\delta) & |\omega| \\
\end{array}
\right).
$$
Here $$\Omega=\sqrt{|\omega|^2-\delta^2},$$ which we will assume here to be real and strictly positive. A direct computation shows that
$\{a,b\}=\1$, $a^2=b^2=0$. Hence $a$ and $b$ are pseudo-fermionic operators. Moreover, $H_{eff}$ can be written in terms of these operators as
$$H_{eff}=\Omega\left(ba-\frac{1}{2}\1\right).$$ It is now easy to identify the elements of $P_\mu$, using (\ref{52}). We get
$$
\mu_1=\frac{1}{\Omega}\left(
\begin{array}{cc}
-|\omega| & -i\delta e^{-i\theta} \\
-i\delta e^{i\theta} & |\omega| \\
\end{array}
\right), \quad \mu_2=i\left(
\begin{array}{cc}
0 &  e^{-i\theta} \\
 e^{i\theta} & 0 \\
\end{array}
\right), \quad \mu_3=\frac{1}{\Omega}\left(
\begin{array}{cc}
i\delta & -|\omega| e^{-i\theta} \\
-|\omega| e^{i\theta} & -i\delta \\
\end{array}
\right), 
$$
which are therefore a (non-trivial, and physically motivated) representation of the Pauli group. In terms of these operators $H_{eff}$ acquires the following particularly simple expression: $$H_{eff}=-\frac{\Omega}{2}\,\mu_3.$$

The elements $u$, $xy$ and $y$ in Lemma \ref{equivpresentation} can be computed and turns out to be
$$
u=i\mu_3=\frac{i}{\Omega}\left(
\begin{array}{cc}
i\delta & -|\omega| e^{-i\theta} \\
-|\omega| e^{i\theta} & -i\delta \\
\end{array}
\right),\qquad xy=i\mu_2=-\left(
\begin{array}{cc}
0 &  e^{-i\theta} \\
e^{i\theta} & 0 \\
\end{array}
\right), \qquad y=i\1.
$$
This produces an interesting consequence for this model. Since $y$ is proportional to the identity element, and since $Q_8$ only contains $\mu_2$ and $\mu_3$, we  interpret the elements of $\mathbb{Z}(4)$ as the constants of motion of the physical system described by $H_{eff}$, or by its generalized form $$H_{eff}'=-\frac{\Omega}{2}\,\mu_3+\alpha\mu_2,$$ for all possible real $\alpha$. This is an interesting, and somehow unexpected, feature of the model: going from a larger $P_\mu$ to a smaller group $Q_8$ does not affect at all the dynamical aspects (i.e., the generator of the time evolution) of the system, since these  are all contained in $Q_8$.
\end{proof}

\begin{remark} It is useful to stress that, while the algebraic construction discussed here is totally independent of what deduced in the first part of the paper, the last part of the proof above, and in particular the role of $\mathbb{Z}(4)$ and $Q_8$ for this specific system, appears quite interesting, and clearly open the possibility that similar results can also be found in other quantum mechanical models described in terms of the Pauli group. The natural question, which we will consider in a future paper, is whether for this kind of systems the factor group $\mathbb{Z}(4)$ (appearing in $P=Q_8 \circ \mathbb{Z}(4)$) always contains the physical constants of motion. In fact, we do not expect this is a completely general feature, but we believe it can be true under some additional, and reasonable, assumptions. We will return on this aspect in the conclusions.
\end{remark}

We end this section by noticing that the $\mu_j$ return $X$, $Y$ and $Z$ in \eqref{paulimatricesxyz} under suitable limiting conditions on the parameters: if $\theta,\delta\rightarrow0$, then $u\rightarrow-iX$ and $xy\rightarrow-iY$. This shows clearly that our present representation extends that of the previous sections.

\section{Conclusions}
 Some properties, which have been noted  in Theorem \ref{maintheorembis}, may be extended to mathematical models, where pseudo-fermionic operators, or generalizations of them, are involved. This is related with the structure of $P$ in Theorem \ref{maintheorem} and Lemma \ref{equivpresentation}.

Look at the proof of Theorem \ref{maintheorem} and at the structure of $P=Q_8 \circ \mathbb{Z}(4) \simeq \pi(U \cup V)/N$, where $U=S^3/Q_8$ and $V=S^3/\mathbb{Z}(4)$ follow the notations of Theorem \ref{maintheorem}. We noted in the final part of the proof Theorem \ref{maintheorembis}  that going from a larger $P_\mu$ to a smaller group $Q_8$ does not affect at all the dynamical aspects (i.e., the generator of the time evolution) of the system, since these  are all contained in $Q_8$. Looking at Remarks \ref{rem1}, \ref{rem2} and \ref{rem3}, we believe that:

\begin{conj}\label{maybe} Groups of the form $A=Q_8 \circ B $, where $B$ is an abelian group containing at most one element of order $2$ may have a construction of the Hamiltonian as we made with $H_{eff}$ in Theorem \ref{maintheorembis}, producing
  the fact that  going from a larger group $A$ to a smaller group $Q_8$ all the dynamical aspects are not affected. 
  \end{conj}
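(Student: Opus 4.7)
The plan is to transport the construction of Section 5 to the family $A=Q_8\circ B$ under the stated hypothesis on $B$. As a starting point I would use Lemma \ref{lem2} to write
\[
A=\langle S_1 \cup S_2 \mid R_1 \cup R_2 \cup R_C \cup R_\varepsilon \rangle,
\]
where $\langle S_1\mid R_1\rangle$ presents $Q_8$, $\langle S_2 \mid R_2\rangle$ presents $B$, $R_C$ encodes $[S_1,S_2]=1$, and $R_\varepsilon$ identifies the unique element of order $2$ of $B$ with the central involution $\mu_3^2$ of $Q_8$. The hypothesis that $B$ contains at most one element of order $2$ is precisely what makes this identification unambiguous, so the amalgamated piece $Q_8\cap B \leqslant Z(Q_8)\cap Z(B)$ is cyclic of order dividing $2$, just as in Lemma \ref{equivpresentation}.

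Second, I would realise the $Q_8$ factor concretely via pseudo-fermionic operators $a,b$ on $\mathbb{C}^2$ as in \eqref{51}--\eqref{52}, so that $\mu_2=i(b-a)$ and $\mu_3=[a,b]$ generate a copy of $Q_8$ inside $P_\mu$. I would then enlarge the ambient Hilbert space to $\widetilde{\Hil}=\mathbb{C}^2 \otimes \Hil_B$, where $\Hil_B$ carries a faithful unitary representation $\pi_B$ of $B$ whose restriction to the distinguished order-$2$ element yields $-\1_{\Hil_B}$. Pauli operators are lifted as $\mu_j \otimes \1_{\Hil_B}$ and the generators of $B$ as $\1_{\mathbb{C}^2}\otimes \pi_B(\cdot)$; by construction $[Q_8,B]=1$ on $\widetilde{\Hil}$ and the common central involution is represented consistently on both sides, so that $A$ appears as a concrete group of unitaries on $\widetilde{\Hil}$.

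Third, I would take as candidate Hamiltonian the direct analogue of $H_{eff}'=-\tfrac{\Omega}{2}\mu_3+\alpha\mu_2$ lifted to $\widetilde{\Hil}$, and observe that every operator $\1_{\mathbb{C}^2}\otimes \pi_B(g)$ with $g\in B$ commutes with $H_{eff}$ because the $Q_8$-part acts trivially on the $B$-factor. Consequently the entire image of $B$ sits in the centraliser of $H_{eff}$, so every element of $B$ is a constant of the motion and the whole time evolution $e^{-itH_{eff}}$ is generated by an element lying in the $Q_8$-factor. This reproduces exactly the phenomenon observed for $B=\mathbb{Z}(4)$ at the end of Section 5, where $y=i\1$.

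The main obstacle, and the reason I expect the statement is currently only a conjecture, is to turn the construction above from a heuristic recipe into one that is physically motivated rather than ad hoc. For $B=\mathbb{Z}(4)$ the identifications $u=i\mu_3$, $xy=i\mu_2$, $y=i\1$ emerged from the damped two-level atom of \cite{tripf}, so the trivial dynamical role of $\mathbb{Z}(4)$ was forced by physics; for general $B$ one needs an underlying quantum model whose effective Hamiltonian admits a genuine pseudo-fermionic factorisation $H_{eff}=\Omega(ba-\tfrac{1}{2}\1)$ and whose ground state is compatible with $\pi_B$. I expect the extra hypotheses hinted at by the authors to be exactly of this flavour, and that Remark \ref{rem1} on free actions of $B$ on a sphere will enter via the topological side of Theorem \ref{maintheorem}, so that a decomposition $A \simeq \pi(U\cup V)/N$ with $V=S^n/B$ remains available as a geometric certificate for the algebraic splitting $A=Q_8\circ B$.
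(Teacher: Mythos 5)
First, be aware that the statement you are addressing is Conjecture~\ref{maybe}: the paper gives no proof of it, the authors explicitly say they do not expect it to hold without further (unspecified) hypotheses, and they defer it to future work. So there is no argument in the paper to compare yours against; what follows is an assessment of your sketch on its own terms. As a recipe for manufacturing examples it is essentially sound: realizing $Q_8$ pseudo-fermionically on $\mathbb{C}^2$, tensoring with a faithful unitary representation of $B$ that sends the distinguished involution to $-\1_{\Hil_B}$, and using the fact that scalars pass through tensor products (so $(-\1)\otimes\1=\1\otimes(-\1)$ implements the amalgamation $R_\varepsilon$ of Lemma~\ref{lem2}) does yield a unitary copy of $A=Q_8\circ B$ together with a Hamiltonian whose centralizer contains the image of $B$. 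You are also right, and commendably explicit, that this only verifies a tautological reading of the conjecture -- the Hamiltonian is engineered to live on the $Q_8$ factor -- whereas the content the authors intend is that \emph{physically derived} pseudo-fermionic Hamiltonians, as in \eqref{triex}, force this behaviour.

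Two concrete corrections are needed. (i) $\mu_2$ and $\mu_3$ do \emph{not} generate a copy of $Q_8$ inside $P_\mu$. By \eqref{53} they are involutions with $\mu_2\mu_3=i\mu_1$, hence $(\mu_2\mu_3)^2=-\1$ and $\langle\mu_2,\mu_3\rangle=\{\pm\1,\pm\mu_2,\pm\mu_3,\pm i\mu_1\}$ contains four non-central involutions: this group is $D_8$, not $Q_8$. For the same reason $\mu_3^2=\1$ is not ``the central involution of $Q_8$''; the quaternion copy is $\langle u,xy\rangle=\langle i\mu_3,i\mu_2\rangle$ with central involution $u^2=-\1$, as in Lemma~\ref{equivpresentation} and the proof of Theorem~\ref{maintheorembis}. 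In view of Remarks~\ref{rem2} and~\ref{rem4} this is precisely the distinction the paper cares about: $D_8$ admits no free action on a sphere, so the topological certificate $A\simeq\pi(U\cup V)/N$ you invoke in your last paragraph would be unavailable if the wrong subgroup were used. (ii) The existence of a faithful $\pi_B$ sending the unique involution of $B$ to $-\1_{\Hil_B}$ is asserted but not checked; it holds because ``abelian with at most one element of order $2$'' forces the Sylow $2$-subgroup of $B$ to be cyclic, so one may take a direct sum of characters each of which is faithful on that Sylow subgroup. This deserves to be spelled out, since it is the only place in your argument where the hypothesis on $B$ is actually used (and when $B$ has no involution the central product degenerates to a direct product, a case you should treat separately).
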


  The behaviour, which we have conjectured above (and shown rigorously in Theorem \ref{maintheorembis}) can be justified, on the basis of the results of Sections 2, 3 and 4.

 If we look at the proof of Theorem \ref{maintheorembis} from a different perspective, we may note that the constants of the motion are somehow unaffected by the operator of central product, when we have $P=Q_8 \circ \mathbb{Z}(4)$. Conjecture \ref{maybe} motivates us to think that the same behaviour happens when we are in presence of an appropriate Hamiltonian and of a group with the structure $A=Q_8 \circ B $, where $B$ can be, for instance, $\mathbb{Z}(2m)$ for any  $m \ge 2$. In fact all such groups act freely on $S^3$.

  Due to \cite[Corollary 9.59, (iv), The Sphere Group Theorem]{hofmor}, which explains the peculiarity of $S^3$ among all $S^n$, we do not think to be reasonable to expect appropriate interpretations of the constants of the motions for Hamiltonians which can be constructed in the same way we did in Theorem \ref{maintheorembis} but with groups of the form $C=D \circ E$ for arbitrary choices of finite groups $D$ and $E$. Maybe one could think at more general frameworks, not involving pseudo-fermionic operators, but then one could loose the information at the level of the physics, while working in the direction of Conjecture \ref{maybe} above, one could get to a significant idea in the mathematical models of quantum mechanics with pseudo-fermions. We hope to give more results in this direction in a near future.
  
\section*{Acknowledgements}

F.B. acknowledges  support  from Palermo University and from the Gruppo Nazionale di Fisica Matematica of the I.N.d.A.M. The other two authors (Y.B. and F.G.R.) thank Shuttleworth Postgraduate Scholarship Programme 2019 and NRF for grants no.  118517 and 113144.

\renewcommand{\theequation}{A.\arabic{equation}}

\section*{Appendix}\label{sectpfs}

The present appendix is meant to make the paper self-contained, by giving some essential definitions and results on pseudo-fermions. We consider two operators $a$ and $b$, acting on the Hilbert space $\Hil=\Bbb C^2$, which satisfy the following rules, \cite{bagpf1}: \be \{a,b\}=\1, \quad
\{a,a\}=0,\quad \{b,b\}=0, \label{FB220}\en where  $\{x,y\}=xy+yx$ is the anti-commutator between $x$ and $y$. We first observe that  a non zero vector $\varphi_0$ exists in $\Hil$ such that $a\,\varphi_0=0$. Similarly,
	a non zero vector $\Psi_0$ exists in $\Hil$ such that $b^\dagger\,\Psi_0=0$.
	This is because the kernels of $a$ and $b^\dagger$ are non-trivial.
\vspace{3mm}

It is now possible to deduce the following results. We first introduce the following non zero vectors \be
\varphi_1:=b\,\varphi_0,\quad \Psi_1=a^\dagger \Psi_0, \label{FB221}\en as well as the non self-adjoint operators \be N=ba,\quad
N^\dagger=a^\dagger b^\dagger. \label{FB222}\en Of course, it makes no sense to consider $b^n\,\varphi_0$ or ${a^\dagger}^n \Psi_0$ for $n\geq2$, since all these vectors are automatically zero. This is analogous to what happens for ordinary fermions. Let now introduce the self-adjoint operators $S_\varphi$ and $S_\Psi$ via their action on a
generic $f\in\Hil$: \be S_\varphi f=\sum_{n=0}^1\left<\varphi_n,f\right>\,\varphi_n, \quad S_\Psi f=\sum_{n=0}^1\left<\Psi_n,f\right>\,\Psi_n.
\label{FB223}\en The following results can be easily proved:

\begin{itemize}
	
	\item[] \be a\varphi_1=\varphi_0,\quad b^\dagger\Psi_1=\Psi_0. \label{FB224}\en
	\item[] \be N\varphi_n=n\varphi_n,\quad N^\dagger \Psi_n=n\Psi_n,  \ \mbox{for} \  n=0,1.\label{FB225}\en
	\item[] If the normalizations of $\varphi_0$ and $\Psi_0$ are chosen in such a way that $\left<\varphi_0,\Psi_0\right>=1$,
	then \be \left<\varphi_k,\Psi_n\right>=\delta_{k,n},    \ \mbox{for} \ k,n=0,1.\label{FB226}\en 
	\item[] $S_\varphi$ and $S_\Psi$ are bounded, strictly positive, self-adjoint, and invertible. They satisfy
	\be \|S_\varphi\|\leq\|\varphi_0\|^2+\|\varphi_1\|^2, \quad \|S_\Psi\|\leq\|\Psi_0\|^2+\|\Psi_1\|^2,\label{FB227}\en 
	\be S_\varphi
	\Psi_n=\varphi_n,\qquad S_\Psi \varphi_n=\Psi_n,\label{FB228}\en for $n=0,1$, as well as $S_\varphi=S_\Psi^{-1}$ and the following 	intertwining relations \be S_\Psi N=N^\dagger  S_\Psi,\qquad S_\varphi N^\dagger =N S_\varphi.\label{FB229}\en

\end{itemize}

 Notice that, being biorthogonal, the vectors of both $\F_\varphi$ and $\F_\Psi$
are linearly independent. Hence $\varphi_0$ and $\varphi_1$ are two linearly independent vectors in a two-dimensional Hilbert space, so that $\F_\varphi$
is a basis for $\Hil$. The same argument obviously can be used for $\F_\Psi$. More than this: both these sets are also Riesz bases. We refer to \cite{baginbagbook} for more details.

\end{document}